\documentclass[11pt]{article}
\usepackage[totalwidth=15.0cm,totalheight=20.0cm]{geometry}
\usepackage{latexsym,amsthm,amsmath,amssymb}
\usepackage[ruled, linesnumbered]{algorithm2e}

\usepackage{times}  
\usepackage{helvet}  
\usepackage{courier}  
\usepackage[hyphens]{url}  
\usepackage{graphicx} 
\usepackage{latexsym,amsthm,amsmath,amssymb,url}
\usepackage{todonotes}

\usepackage{enumitem}
\usepackage{tikz,authblk}
\usetikzlibrary{decorations.pathreplacing}
\usepackage{caption}
\usepackage{subcaption}
\usepackage{hyperref}
\usepackage{color}

\usepackage{tikz,authblk}
\usetikzlibrary{decorations.pathreplacing}

\newcommand{\cA}{\ensuremath{\mathcal{A}}}
\newcommand{\cB}{\ensuremath{\mathcal{B}}}

\newcommand{\cI}{\ensuremath{\mathcal{I}}}

\newcommand{\bigO}{\mathcal{O}}

\newtheorem{theorem}{Theorem}[section]

\newtheorem{proposition}[theorem]{Proposition}

\newtheorem{claim}{Claim}[theorem]
\newtheorem{observation}{Observation}[theorem]

\newcommand{\EE}[1]{{#1}}
\newcommand{\GG}[1]{{#1}}

\newcommand{\MW}[1]{{#1}}
\newcommand{\AY}[1]{{#1}}

\begin{document}

\title{Preference Swaps for the Stable Matching Problem}
\author[1]{Eduard Eiben}
\author[1]{Gregory Gutin}
\author[2]{Philip R. Neary}
\author[3]{Cl\'{e}ment Rambaud}
\author[1]{Magnus Wahlstr{\"o}m}
\author[4,5]{Anders Yeo}
\affil[1]{Department of Computer Science, Royal Holloway, University of London, TW20 0EX, Egham, Surrey, UK}
\affil[2]{Department of Economics, Royal Holloway, University of London, TW20 0EX, Egham, Surrey, UK}
\affil[3]{DIENS, \'Ecole normale sup\'erieure, PSL University, Paris, France}
\affil[4]{IMADA, University of Southern Denmark, Odense, Denmark} 
\affil[5]{Department of Mathematics, University of Johannesburg, Auckland Park, 2006 South Africa}

\date{}
\maketitle

\begin{abstract}
\noindent 
An instance $I$ of  the Stable Matching Problem (SMP) is given by a bipartite graph with a preference list of neighbors for every vertex. 
A swap in $I$ is the exchange of two consecutive vertices in a preference list.  A swap can be viewed as a smallest perturbation of $I$. 
Boehmer et al. (2021) designed a polynomial-time algorithm to find the minimum number of swaps required to turn a given maximal matching into a stable matching. We generalize this result to the many-to-many version of SMP. We do so first by introducing a new representation of SMP as an extended bipartite graph and subsequently by
reducing the problem to submodular minimization. It is a natural problem to establish the computational complexity of
deciding whether at most $k$ swaps are enough to turn $I$ into an instance where one of the maximum matchings is stable. 
Using a hardness result of Gupta et al. (2020), we prove that this problem is NP-hard and, moreover, this problem parameterised by $k$ is W[1]-hard.
We also obtain a lower bound on the running time for solving the problem using the Exponential Time Hypothesis.\end{abstract}

\pagestyle{plain}

\section{Introduction}

We consider the classic stable matching problem of \cite{gale1962college} described as follows. There is a two-sided market with two types of agents: workers on one side and firms on the other. Workers seek employment while firms have vacancies and so are looking to hire \GG{(let us assume for now that every firm wishes to employ just one worker and vice versa)}.  
Workers have preference lists over firms and firms have preference lists over workers. This is modeled as a bipartite graph $G$, where the partite sets are workers $W$ and firms $F$ and there is a preference list for each $a\in W\cup F$ consisting of all neighbors of $a$ in $G$. A matching $M$ of workers to firms in $G$ is said to be stable if there is no edge $wf$ in $G$ ($w\in W$, $f\in F$) such that $w$ is either not matched in $M$ or matched to a firm less preferable to $w$ than $f$, and $f$ is either not matched in $M$ or matched to a worker less preferable to $f$ than $w$.
In the {\sc Stable Matching Problem} (SMP), given a bipartite graph $G$ and a set $L_G$ of preference lists, the goal is to find a stable matching.


The deferred acceptance algorithm of Gale and Shapley \cite{gale1962college} confirms that a stable matching always exists. By definition, every stable matching is maximal, but in general not every maximal matching is stable.
In this paper we focus on the relationship between maximal matchings and stable matchings. In particular, when a maximal matching is not a stable matching, how far from being stable is it?

%

By virtue of the Rural Hospitals Theorem \cite{McVitieW70,Roth84,Roth86}, the set of agents matched is the same in every stable matching. Therefore, to get a larger stable matching, the preference lists need to be amended/altered.
Consider a worker who ranks firm $x$ just ahead of firm $y$ in their preference list. A \emph{swap} exchanges the relative placing of firms $x$ and $y,$ leaving the remainder of the preference list unchanged. (A similar statement holds for the preference lists of firms.) With swaps so defined, we can now address questions like that above concerning the distance between stable matchings and maximal matchings, where the unit of distance we employ is that of a swap.



 
The economic motivation for this paper concerns how a policy maker might amend an existing two-sided matching market (a so-called market without prices) so that the resulting stable outcomes are more socially desirable. For example, when a maximum matching is a stable matching the number of unemployed individuals and the number of unfilled vacancies is minimised.
Historically, markets with prices were allowed to evolve naturally, with policy makers only interfering when they saw ways to `improve' outcomes.\footnote{Examples of tools available to policy makers in such markets include taxes, tariffs, quotas, outright bans, price ceilings, etc.} But studies that address how market interference might improve existing matching markets are surprisingly scarce.\footnote{This should not be confused with the enormous literature on \emph{market design}, that seeks to engineer from scratch matching markets with `desirable' properties.}

Above for simplicity we  assumed that every firm wishes to employ just one worker and every worker wants only one job. This is an example of so-called one-to-one SMP.  Such a model is restrictive as, for example, it does not allow us to consider natural situations like that where firms have more than one vacancy and/or a worker may seek more than one job \cite{Roth84,Roth86,KamadaKojima:2015}.

\noindent
\textbf{Contributions.} We study two related problems on using swaps to obtain large (maximum) stable matchings: 
\begin{description}
\item[Problem 1:] Decide whether at most $k$ swaps can turn a given maximal matching into a stable matching.
\item[Problem 2:] Decide whether at most $k$ swaps are enough to have an arbitrary stable {\em maximum} matching.
\end{description}
While the two problems look similar at the first glance, their computational complexities differ. For Problem 1, we consider many-to-many SMP and obtain a polynomial-time algorithm, see  Theorem \ref{thm:capacitated_algorithm}. 
Our design of the algorithm uses two tools:
(a) a new representation of SMP input as an extended bipartite graph ${\cal G}_G$ which encodes both $G$ and its preference lists and provides us with a convenient way to study swaps, and
(b) a reduction to submodular function minimization. 

Theorem \ref{thm:capacitated_algorithm} may suggest that Problem 2 is also  polynomial-time solvable. Unfortunately, this is highly unlikely as Theorem \ref{thm:hardness_swap_distance} shows that the problem is  NP-hard and moreover if $k$ is the parameter then the problem is W[1]-hard\footnote{For an excellent introduction to parameterized complexity, see \cite{CyganFKLMPPS15}.}, 
even if $G$ has a perfect matching. We also obtain a lower bound on the running time for solving the problem using the Exponential Time Hypothesis (ETH) \cite{ImpagliazzoP99}. 



\noindent
\textbf{Related work.} \GG{The classical formulation of SMP uses agents and their preference lists, or equivalently bipartite graphs  and preference lists of the vertices. 
Maffray  \cite{Maffray1992} gave another representation of SMP using directed graphs, which does not use explicit preference lists. Maffray's representation is better suitable for some SMP studies than the standard one  \cite{BalinskiR97,BalinskiR98,GutinNY21}. Our new representation of SMP uses extended bipartite graphs and no preference lists either.}

\MW{Boehmer et al. \cite{BoehmerBHN20} give examples of external
manipulations which can lead to changes in one-to-one SMP preference
lists, among them the notion of a swap.  They study several problems
related to matchings that are stable after the application of a small
number of such manipulations steps, including proving that Problem 1
is polynomial-time solvable for one-to-one SMP.  Thus,
Theorem~\ref{thm:capacitated_algorithm} generalizes their result.}

Chen et al. \cite{ChenSS19} study the stable matching $k$-robustness problem for one-to-one SMP, which is the problem of deciding whether $(G,L_G)$ has a stable matching remaining stable even after at most $k$ swaps. Surprisingly, the problem turns out polynomial-time solvable. Chen et al. \cite{ChenSS19} observed that Problem 2 is in XP if $k$ is the parameter and proved that Problem 2 is W[1]-hard if $G$ has a perfect matching and the parameter is $n_u$ the number of vertices unmatched in any stable matching of $G$. 
\GG{By Propositions~\ref{prop:comparison_to_chen_at_all}  and \ref{prop:comparison_to_chen_at_all2}, our W[1]-hardness result is strictly stronger than that in \cite{ChenSS19}.}


Mai and Vazirani \cite{MaiV18,MaiV2020} studied one-to-one SMP when the number of vertices in both partite sets of $G$ is the same, denoted by $n.$ A {\em shift} is an exchange of any two (not necessarily consecutive) vertices in a preference list. 
Mai and Vazirani \cite{MaiV18} considered a given distribution among all shifts and studied the effect of a random shift on matching stability. They obtained a polynomial-time algorithm for finding a matching which is stable in the input instance of SMP and has the maximum probability of remaining stable after a random shift took place. This algorithm applies to the problem we study in this paper only for $k=1$ as a sequence of two swaps may not be a shift. Mai and Vazirani \cite{MaiV2020} considered a set $S$ of permutations in a single preference list and designed an $O(|S|{\rm poly}(n))$-time algorithm for finding a matching which is stable in the given instance of SMP and remains so after any permutation in $S$ took place. 

Bredereck et al. \cite{BredereckCKLN20} designed a polynomial-time algorithm for the following problem: given a bipartite graph $G$, a pair $L$ and $L'$ of preference lists for $G$, a stable matching $M$ of $(G,L)$ and a natural number $k$, decide whether $(G,L')$ has a stable matching $M'$ such that the symmetric difference of $M$ and $M'$ is at most $k.$ 



A way to increase the size of stable matchings is to relax the zero blocking pair condition.
Gupta et al. \cite{gupta2020parameterized} studied the almost stable matching problem, which is a problem of finding a matching whose size exceeds that of a stable matching by at least $t$ and has at most $k$ blocking edges. Let $d$ be the maximum length of a preference list. 
Gupta et al. \cite{gupta2020parameterized} proved the following surprising result: the problem is intractable even when parameterized by the combined parameter $k+t+d.$ This result demonstrates significant computational difficulty in solving natural approximating stable matching problems.

\section{Terminology, Notation and Preliminaries}\label{sec:another_representation}

In this section, we will first give a more formal definition of the {\sc Stable Matching Problem} (SMP) using bipartite graphs and preference lists for vertices and then introduce an equivalent new representation of SMP which also uses bipartite graphs, but without explicit preference lists. The new representation gives us a more suitable approach than the standard one for some SMP problems studied in this paper.

The input of SMP  is $(G,L_G)$, where 
$G=(A,B;E)$ is a bipartite graph with partite sets $A$ and $B$ and edge set $E$ and $L_G$ is the set of lists $\ell(v)=(u_1,\dots ,u_{d(v)})$ which order all the neighbors of every $v\in A\cup B.$ 
If $i<j$ we say that $v$ {\em prefers} $u_i$ to $u_j.$  
Let $M$ be a matching in $G$. Then an edge $ab\in E\setminus M$ with $a\in A,b\in B$ is a {\em blocking edge} of $M$ if the following two conditions hold simultaneously: (i) either $a$ is not an endpoint of any edge in $M$ or there is a $b'\in B$
such that $ab'\in M$ and {$a$ prefers $b$ to $b'$
and, similarly, (ii) either $b$ is not an endpoint of any edge in $M$ or there is an $a'\in A$ such that $a'b\in M$ and {$b$ prefers $a$ to $a'$}. 
Figure \ref{fig:reduction_vertex_cover1} shows preference lists for the vertices of the depicted graph; $a_1$ prefers $b_1$ to $b_3$ and $b_3$ to $b_2$; $(1,3,2)$ only includes indexes of the corresponding vertices in $B$. Figure \ref{fig:reduction_vertex_cover1} also gives examples of blocking edges. 

A matching $M$ is {\em stable} if it has no blocking edges.
It follows from the definition of a blocking edge that a stable matching $M$ is {\em maximal} i.e. no edge from $G$ can be added to $M$ such that it remains a matching. A matching $M$ is {\em perfect} if $|M|=|A|=|B|.$   The aim of SMP is to find a stable matching in $G$. The deferred acceptance algorithm of 
\cite{gale1962college} outputs a stable matching in $G$. In particular,  \cite{gale1962college} studied the above wherein (i) the bipartite graph $G$ is complete, and (ii) the partite sets are of equal size ($|A| =|B|$). 
For an instance of SMP, a {\em swap} $\sigma$ is an adjacent transposition in $\ell(v)$ for some $v\in A\cup B$ i.e. this operation results in exchanging two consecutive elements $u_{i-1},u_{i}$ of $\ell(v)$ and is denoted by
$u_{i-1} \leftrightarrow u_i$. We denote by $\sigma(L_G)$ the new set of preference lists after the swap $\sigma$.


Let us now introduce the new SMP representation, which fuses $G$ and $L_G$ into a single bipartite graph, called the {\em extended bipartite graph}, with specifically structured partite sets as follows.
Let $G=(A,B;E),$ where $A=\{a_1,\dots ,a_{n}\}$ and $B=\{b_1,\dots ,b_{n'}\}.$ Let $d_i$ ($d'_i$, respectively) be the degree of $a_i$ ($b_i$, respectively). 
Now we construct a new bipartite graph {${\cal G}_G$} by  replacing every  $a_i$ $(i\in [n])$ with a $d_i$-tuple $A_i=(a^i_1, \dots, a^i_{d_i})$ of vertices 
and every $b_i$ $(i\in [n'])$ with a $d'_i$-tuple $B_i=(b^i_1, \dots, b^i_{d'_i})$ of vertices. 
{If $G$ is clear from the context, we may omit the subscript and write ${\cal G}$ instead of ${\cal G}_G$ and $L$ instead of $L_G$.}

The partite sets of $\cal G$ are ${\cal A}=A_1\cup \dots \cup A_n$ and ${\cal B}=B_1\cup \dots \cup B_{n'}$.
The edge set $E({\cal G})$ of ${\cal G}$ is defined as follows:  $a^i_pb^j_q$ is an edge in ${\cal G}$ if $a_ib_j\in E$, $b_j$ is the $p$'th element of $\ell(a_i)$ and $a_i$ is the $q$'th element of $\ell(b_j).$ Note that the maximum degree of $\cal G$ is 1.
Thus, ${\cal G}=({\cal A} ,{\cal B};E({\cal G})).$ Figure \ref{fig:reduction_vertex_cover2} shows $\cal G$ for $(G,L)$ in Figure \ref{fig:reduction_vertex_cover1}.

A \emph{matching} in $\cal G$
is a subset $M$ of $E({\cal G})$ such that for every $i\in [n]$, 
there is at most one edge between $A_i$ and $\cal B$ that belongs to $M$,
and for every $i'\in [n']$, there is 
at most one edge between $\cal A$ and $B_{i'}$ that belongs to $M$.
A matching $M$ is said to be \emph{maximal} if no edge of $\cal G$ can be added to it.
A matching $M$ is \emph{stable} if there is no edge
of the form $a^i_j b^{i'}_{j'} \in E({\cal G}) \setminus M$ such that:
\begin{eqnarray*}
j &< &\min \{n+1\} \cup \{r \mid a^i_r \mbox{ is matched in } M\} \GG{\mbox{ and }}\\
j' & < &\min \{n'+1\} \cup 
                \{r' \mid b^{i'}_{r'} \mbox{ is matched in } M\}
\end{eqnarray*}
Otherwise, such an edge is called a \emph{blocking} edge. Figure \ref{fig:reduction_vertex_cover2} illustrates the notion of a blocking edge in $\cal G$.
{It is not hard to see that the blocking edges in $G$ (the stable matchings of $G$, respectively) are in a one-to-one correspondence to the blocking edges in $\cal G$ (the stable matchings of $\cal G$, respectively).} 

A \emph{swap} $\sigma$ is an adjacent transposition of two vertices in some $A_i$ $(i\in [n])$ or $B_{i'}$ $({i'}\in [n'])$. Thus, swaps can be 
of the form $a^i_j \leftrightarrow a^i_{j-1}$ in some $A_i$,
or $b^{i'}_{j'} \leftrightarrow b^{i'}_{j'-1}$ in some $B_{i'}$.
We denote by $\sigma(\mathcal{A}, \mathcal{B}) = 
(\mathcal{A}', \mathcal{B}')$ the resulting values 
of $\mathcal{A}$ and $\mathcal{B}$.  
\begin{figure}[hbtp]
    \centering
    \begin{subfigure}[b]{0.4\textwidth}
        \centering
        \tikzstyle{vertexDOT}=[scale=0.25,circle,draw,fill]
\tikzstyle{vertexY}=[circle,draw, top color=gray!5, bottom color=gray!30, minimum size=12pt, scale=0.85, inner sep=0.99pt]
\tikzstyle{vertexZ}=[circle,draw, top color=gray!5, bottom color=red!70, minimum size=7pt, scale=0.85, inner sep=0.5pt]
\begin{tikzpicture}[scale=0.55]

\node (x1) at (3,7) [vertexY] {$a_1$};
\node (x2) at (3,4) [vertexY] {$a_2$};
\node (x3) at (3,1) [vertexY] {$a_3$};

\node (y1) at (7,7) [vertexY] {$b_1$};
\node (y2) at (7,4) [vertexY] {$b_2$};
\node (y3) at (7,1) [vertexY] {$b_3$};

\draw [densely dashed, line width=0.04cm] (x1) -- (y1);
\draw [line width=0.08cm] (x1) -- (y2);
\draw [densely dashed, line width=0.04cm] (x1) -- (y3);

\draw [line width=0.08cm] (x2) -- (y1);
\draw [densely dashed, line width=0.04cm] (x2) -- (y2);
\draw [line width=0.02cm] (x2) -- (y3);

\draw [densely dashed, line width=0.04cm] (x3) -- (y1);
\draw [line width=0.02cm] (x3) -- (y2);

\node [scale=0.6] at (2.5,7.7) {$(1,3,2)$};
\node [scale=0.6] at (2.5,4.7) {$(2,1,3)$};
\node [scale=0.6] at (2.5,1.7) {$(1,2)$};

\node [scale=0.6] at (7.5,7.7) {$(1,3,2)$};
\node [scale=0.6] at (7.5,4.7) {$(2,1,3)$};
\node [scale=0.6] at (7.5,1.7) {$(2,1)$};

\end{tikzpicture}
        \caption{An example of an SMP instance with the preference lists.
            A given matching $M$ is represented by the bold edges and
            the blocking edges are dashed.}
        \label{fig:reduction_vertex_cover1}
    \end{subfigure}
\hspace{2cm}
    \begin{subfigure}[b]{0.4\textwidth}
        \centering
        \tikzstyle{vertexDOT}=[scale=0.25,circle,draw,fill]
\tikzstyle{vertexY}=[circle,draw, top color=gray!5, bottom color=gray!30, minimum size=12pt, scale=0.65, inner sep=0.99pt]
\tikzstyle{vertexZ}=[circle,draw, top color=gray!5, bottom color=red!70, minimum size=7pt, scale=0.85, inner sep=0.5pt]
\begin{tikzpicture}[scale=0.55]

\node (x11) at (3,10) [vertexY] {$a_1^1$};
\node (x12) at (3,9) [vertexY] {$a_2^1$};
\node (x13) at (3,8) [vertexY] {$a_3^1$};

\node (x21) at (3,6) [vertexY] {$a_1^2$};
\node (x22) at (3,5) [vertexY] {$a_2^2$};
\node (x23) at (3,4) [vertexY] {$a_3^2$};

\node (x31) at (3,2) [vertexY] {$a_1^3$};
\node (x32) at (3,1) [vertexY] {$a_2^3$};

\node [scale=0.8] at (2.2,10.8)  {$A_1$};

\node [scale=0.8] at (2.2,6.8)  {$A_2$};

\node [scale=0.8] at (2.2,2.8)  {$A_3$};

\draw [rounded corners] (2.5,0.5) rectangle (3.5,2.5);
\draw [rounded corners] (2.5,3.5) rectangle (3.5,6.5);
\draw [rounded corners] (2.5,7.5) rectangle (3.5,10.5);

\node (y11) at (8,10) [vertexY] {$b_1^1$};
\node (y12) at (8,9) [vertexY] {$b_2^1$};
\node (y13) at (8,8) [vertexY] {$b_3^1$};

\node (y21) at (8,6) [vertexY] {$b_1^2$};
\node (y22) at (8,5) [vertexY] {$b_2^2$};
\node (y23) at (8,4) [vertexY] {$b_3^2$};

\node (y31) at (8,2) [vertexY] {$b_1^3$};
\node (y32) at (8,1) [vertexY] {$b_2^3$};

\node [scale=0.8] at (8.7,10.8)  {$B_1$};

\node [scale=0.8] at (8.7,6.8)  {$B_2$};

\node [scale=0.8] at (8.7,2.8)  {$B_3$};


\draw [rounded corners] (7.5,0.5) rectangle (8.5,2.5);
\draw [rounded corners] (7.5,3.5) rectangle (8.5,6.5);
\draw [rounded corners] (7.5,7.5) rectangle (8.5,10.5);

\draw [densely dashed, line width=0.04cm] (x11) -- (y11);
\draw [densely dashed, line width=0.04cm] (x12) -- (y32);
\draw [line width=0.08cm] (x13) -- (y22);

\draw [densely dashed, line width=0.04cm] (x21) -- (y21);
\draw [line width=0.08cm] (x22) -- (y13);
\draw [line width=0.02cm] (x23) -- (y31);

\draw [densely dashed, line width=0.04cm] (x31) -- (y12);
\draw [line width=0.02cm] (x32) -- (y23);

\end{tikzpicture}
        \caption{The corresponding new representation, ${\cal G}_G$. The higher
            a vertex is placed, the higher its preference.}
        \label{fig:reduction_vertex_cover2}
    \end{subfigure}
    \caption{Illustration of the correspondence between the
        two representations of an SMP instance.}
    \label{fig:another_representation}
\end{figure}
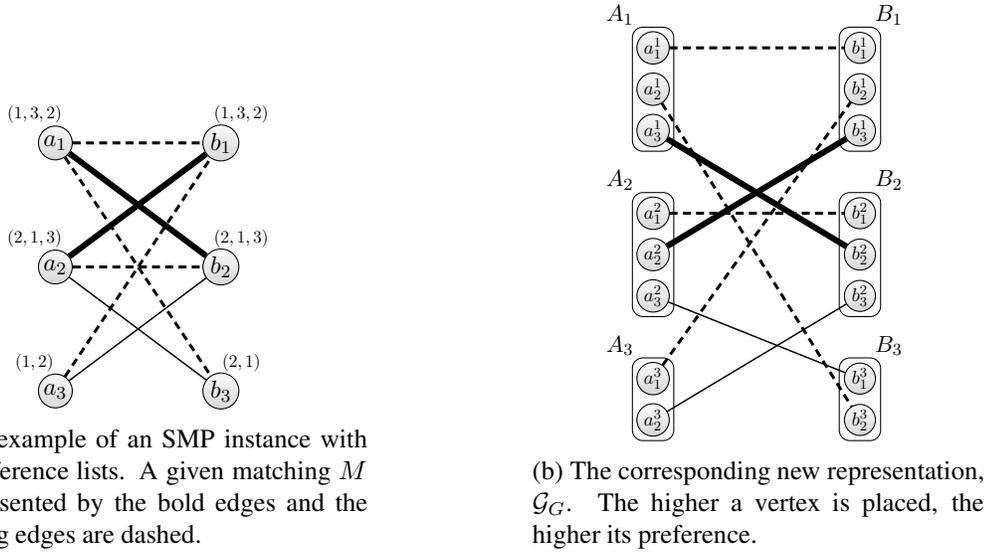

We also consider the more general capacitated extension of SMP (it is equivalent to many-to-many SMP), where every vertex of $G$ has a positive weight given by a capacity function $c : A \cup B \to \mathbb{Z}_{+}.$ Then we are looking for a stable generalization of a matching, which is a subgraph $S$ of $G$ where the degree $d_S(v)\le c(v)$ for every $v\in A\cup B.$ The notion of a blocking edge can be extended from the uncapacited case (where $c(v)=1$ for every $v$) to the capacitated case by saying that $ab\in E\setminus E(S)$ is a {\em blocking edge} of $S$
if the following two conditions hold simultaneously: (i) either $d_S(a) < c(a)$ or there is a $b'\in B$
such that $ab'\in E(S)$ and {$a$ prefers $b$ to $b'$
	and, similarly, (ii) either $d_S(b) < c(b)$ or there is an $a'\in A$ such that $a'b\in E(S)$ and {$b$ prefers $a$ to $a'.$} 
A subgraph $S$ is {\em stable} if it has no blocking edges.

\section{Making a given matching stable}\label{sec:generalisation_to_capacities}
\MW{In this section, we show that Problem~1 can be solved in polynomial
time, even in the capacitated case; i.e., given a capacitated SMP
instance $I=(G, L_G)$ and a subgraph $S$ of $G$, there is a
polynomial-time algorithm for computing the minimum number of swaps
required to make $S$ stable (if possible).  Our algorithm builds on
the extended bipartite graph $\cal G$ defined in the previous
section. In the uncapacitated case, it is possible to show that the
problem reduces to a vertex cover instance on a bipartite graph
derived from $\cal G$; however, an efficient solution for Problem~1 in
the uncapacitated case was already given by Bredereck et
al.~\cite{BredereckCKLN20}, so we omit the details and focus on the
capacitated case.}

\MW{Note that the standard method of reducing capacitated matching
problems to uncapacitated problems, by taking $c(v)$ copies of every
vertex $v$ (with capacity $c(v)$), does not work here since the cost
of making swaps in $\ell(v)$ is lost in the translation. 
Instead we solve the problem using more powerful tools.  We show that
the number of swaps required in $\ell(v)$, as a function of a set of
blocking edges to fix at $v$, is \emph{submodular}, allowing us to
reduce Problem~1 to a case of submodular function minimization.}
	For a set \(X\), a function $f\colon \mathcal{P}(X)\rightarrow \mathbb{N}$ is submodular if for every \(Y, Z\subseteq X\) we have that $f(Y)+f(Z)\ge f(Y\cup Z) + f(Y\cap Z)$. In our proof it will be convenient to consider also the following equivalent definition of a submodular function. It is well known that $f\colon \mathcal{P}(X)\rightarrow \mathbb{N}$ is submodular if and only if for
	\(Y\subseteq X\) and for every \(x_1, x_2\in  X\setminus Y\) such that $x_1\neq x_2$ it holds that \(f(Y\cup \{x_1,x_2\}) - f(Y\cup \{x_1\})\le f(Y\cup \{x_2\}) - f(Y)\). Before we proceed to proof our main result of this section, we state the following result in \cite{schrijver2000combinatorial}. 
	
	\begin{theorem}[\cite{schrijver2000combinatorial}]\label{thm:schrijver}
		Let $X$ be a finite set and $f\colon \mathcal{P}(X)\rightarrow \mathbb{N}$ a submodular function such that for every \(Y\subseteq X\) the value \(f(Y)\) can be computed in polynomial time. Then there is a strongly polynomial-time algorithm minimizing the function \(f\).
	\end{theorem}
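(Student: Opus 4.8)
The plan is to prove Theorem~\ref{thm:schrijver} through the polyhedral theory of submodular functions, in the line of Edmonds and culminating in Schrijver's combinatorial augmenting-path algorithm. Throughout I would assume without loss of generality that $f(\emptyset)=0$ (replacing $f$ by $f-f(\emptyset)$ changes neither submodularity nor the set of minimizers), and I would work with the ground set $X$ of size $m=|X|$.

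First I would set up Edmonds' base polyhedron
\[ B(f) = \set{x\in\mathbb{R}^X : x(X)=f(X) \text{ and } x(Y)\le f(Y) \text{ for all } Y\subseteq X}, \]
where $x(Y)=\sum_{e\in Y}x(e)$, and recall that its vertices (the \emph{extreme bases}) are produced by the greedy algorithm: for a linear order $e_1\prec\dots\prec e_m$ of $X$, the point $v$ with $v(e_i)=f(\set{e_1,\dots,e_i})-f(\set{e_1,\dots,e_{i-1}})$ lies in $B(f)$, and every vertex arises this way. Each $v(e_i)$ uses $O(m)$ evaluations of $f$, hence is computable in polynomial time by hypothesis. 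The first key ingredient is Edmonds' min–max identity
\[ \min_{Y\subseteq X} f(Y) \;=\; \max_{x\in B(f)} x^-(X), \qquad x^-(e):=\min\set{x(e),0}. \]
The inequality $\le$ is immediate since $x^-(X)\le x(Y)\le f(Y)$ for every $Y\subseteq X$ and every $x\in B(f)$; the reverse direction is the substantive part and is precisely what an optimal $x$ will certify. Given an optimal $x$, a suitable tight set (in particular $Y^\ast=\set{e:x(e)<0}$ at optimality) is a minimizer, so it suffices to compute a maximizer of $x^-(X)$ over $B(f)$.

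The second step is the engine. I would maintain a point $x\in B(f)$ as an explicit convex combination $x=\sum_i \lambda_i v_i$ of extreme bases $v_i$, each stored via its generating order $\prec_i$, keeping the number of bases at most $m$ by Carathéodory (using Gaussian elimination to purge redundant ones, since $B(f)$ lies in the affine hyperplane $x(X)=f(X)$). Writing $P=\set{e:x(e)>0}$ and $N=\set{e:x(e)<0}$, I would build an auxiliary digraph whose arcs encode the admissible \emph{exchange} operations: shifting an element one step earlier in some $\prec_i$ perturbs $v_i$, and hence $x$, by transferring weight between two coordinates. If no directed $P$–$N$ path exists, the set of elements that cannot reach $N$ is a minimizer and we halt; otherwise we augment along a shortest such path, moving weight from $P$ to $N$, thereby strictly increasing $x^-(X)$, and update the $\lambda_i$ and orders $\prec_i$ accordingly.

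The main obstacle, and the whole content of Schrijver's contribution, is showing that this terminates in \emph{strongly} polynomial time, i.e. in a number of arithmetic operations and $f$-evaluations bounded by a polynomial in $m$ alone, independent of the magnitudes of $f$. The delicate point is to select the augmenting exchanges by a lexicographic rule tied to a fixed linear order on $X$, and then to prove a monotonicity invariant: a distance-type potential attached to the elements is non-decreasing, so a saturated arc direction never reappears. This bounds the number of augmentation phases by a polynomial in $m$, while the per-phase bookkeeping on the convex combination and the greedy evaluations is clearly polynomial. In the write-up I would present the greedy characterization of extreme bases and the min–max duality in full, state the augmentation step precisely, and then invoke Schrijver's lexicographic/distance analysis for the termination bound rather than reproving it, since that analysis is the technical heart and is independent of the matching application for which the theorem is used here.
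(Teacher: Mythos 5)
First, a point of comparison that matters here: the paper gives no proof of Theorem~\ref{thm:schrijver} at all --- it is quoted from \cite{schrijver2000combinatorial} and used as a black box. Your proposal ultimately does the same thing, just with more scaffolding: the base polyhedron $B(f)$, the greedy description of its extreme points, Edmonds' identity $\min_{Y\subseteq X} f(Y)=\max_{x\in B(f)}x^-(X)$, the convex-combination representation with Carath\'eodory pruning, and augmentation along $P$--$N$ paths in an exchange digraph are all the standard (and correctly recalled) framework, but the single claim that constitutes the theorem --- that the augmentations can be scheduled so that the total number of steps is polynomial in $|X|$ alone --- is precisely the step you announce you will ``invoke \dots rather than reproving.'' The lexicographic selection rule and the distance-monotonicity invariant are not a technical appendix to Schrijver's proof; they \emph{are} the proof. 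So as written, your proposal reduces the theorem to its own hardest part and is not an independent proof, although as an annotated citation it matches how the result is actually established.

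Beyond that structural issue, two of the steps you do spell out are wrong. (i) It is false that at an optimal $x\in B(f)$ the set $Y^\ast=\{e : x(e)<0\}$ is a minimizer. Take $X=\{a,b\}$ with $f(\emptyset)=0$, $f(\{a\})=0$, $f(\{b\})=1$, $f(\{a,b\})=-1$; this is submodular, and the greedy vertex $x=(0,-1)$ (order $a\prec b$) satisfies $x^-(X)=-1=\min_Y f(Y)$, hence is optimal, yet $f(Y^\ast)=f(\{b\})=1$. The correct statement is that at optimality some \emph{tight} set $Y$ with $N\subseteq Y\subseteq X\setminus P$ is a minimizer, and the algorithm must exhibit one; $N$ itself need not be tight. (ii) Your halting certificate points the wrong way. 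With arcs oriented so that augmenting paths run from $P$ to $N$ (an arc $(u,t)$ whenever $u\prec_i t$ in some order), the minimizer produced when no $P$--$N$ path exists is the set $U$ of elements that \emph{can} reach $N$, not the set that cannot: $U$ is then a lower ideal of every $\prec_i$, i.e.\ a common prefix, so each extreme base satisfies $v_i(U)=f(U)$ by telescoping, giving $x(U)=f(U)$, and since $N\subseteq U$ and $P\cap U=\emptyset$ we get $f(U)=x(U)=x^-(X)\le\min_Y f(Y)$. The set you name (elements that cannot reach $N$) contains $P$, is an upper set rather than a prefix, and is not tight in general, so the argument would fail exactly at the moment the algorithm is supposed to output its answer.
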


	\begin{theorem}\label{thm:capacitated_algorithm}
		Let $(G=(A,B,E, c), L_G)$ be an SMP instance
		with capacities and $S$ a subgraph of $G$ such that $d_S(v)\le c(v)$ for every $v\in A\cup B$.
		There exists a polynomial-time algorithm that 
		finds a minimum length sequence $\sigma_1, \dots, \sigma_k$ of swaps
		such that $S$ is a stable subgraph in
		$(G, (\sigma_k\circ\dots\circ\sigma_1(L_G)))$.
	\end{theorem}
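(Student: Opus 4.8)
The plan is to reduce the problem to a single instance of submodular function minimization, to which Theorem~\ref{thm:schrijver} applies. I work throughout in the extended bipartite graph $\cG$. Since swaps never change which edges of $S$ are present, an edge $ab\in E\setminus E(S)$ can be made non-blocking only by reordering the list of an endpoint that is \emph{saturated} (that is, $d_S(a)=c(a)$, respectively $d_S(b)=c(b)$): if a currently blocking edge has two unsaturated endpoints it blocks no matter how the lists are permuted. So the first step is feasibility. I would argue that $S$ can be made stable if and only if every blocking edge has at least one saturated endpoint, a condition checkable in polynomial time; otherwise the algorithm reports that no swap sequence works.

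For a saturated vertex $v$, let $S_v$ be its set of matched neighbours (so $|S_v|=c(v)$), and for a set $T$ of non-matched neighbours let $\mathrm{cost}_v(T)$ denote the minimum number of adjacent transpositions in $\ell(v)$ that brings every element of $T$ below every element of $S_v$. An edge $ab$ blocking at $a$ becomes non-blocking there exactly when $b$ lies below all of $S_a$, so repairing a chosen set $T$ of edges at $v$ costs precisely $\mathrm{cost}_v(T)$; and, crucially, swaps in distinct lists are independent, so the total number of swaps is $\sum_v \mathrm{cost}_v(T_v)$ over the repair sets $T_v$ we select. I would first show that $\mathrm{cost}_v$ is polynomial-time computable: it equals the minimum, over an assignment of each ``don't-care'' neighbour of $v$ to the block above or below $S_v$, of the number of inversions of the resulting two-block target order, which can be evaluated directly (for instance by a short dynamic program over $\ell(v)$). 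One also checks that an optimal reordering is realizable by only demoting elements of $T$ and raising elements of $S_v$, so repairing edges at $v$ never creates a new blocking edge at $v$.

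Next I would assemble the reduction. Each blocking edge with a unique saturated endpoint is forced to be repaired there and so contributes a fixed base set $F_v\subseteq N(v)$ to every $T_v$. The remaining blocking edges have two saturated endpoints; collect them into a ground set $X$, and for $Y\subseteq X$ interpret $Y$ as the subset of these edges that we repair at their endpoint in $B$ (the rest being repaired at their endpoint in $A$). Then $T_a(Y)=F_a\cup \set{b : ab\in X\setminus Y}$ for $a\in A$ and $T_b(Y)=F_b\cup\set{a : ab\in Y}$ for $b\in B$, and the objective is $F(Y)=\sum_v \mathrm{cost}_v(T_v(Y))$. The map $Y\mapsto T_b(Y)$ is monotone while $Y\mapsto T_a(Y)$ is antitone; since submodularity is preserved both under fixing a base set and under composition with a coordinatewise complement, each summand is a submodular function of $Y$, and a sum of submodular functions is submodular. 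As $F$ is polynomial-time evaluable, Theorem~\ref{thm:schrijver} minimizes it in strongly polynomial time; an optimal $Y^*$ determines all repair sets $T_v$, from which the actual minimum-length swap sequence is read off list by list.

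The main obstacle is the submodularity of the single-vertex cost $\mathrm{cost}_v$, and I would prove it via the equivalent diminishing-returns condition stated just before the theorem: for $T\subseteq T'$ and $t\notin T'$ one must show $\mathrm{cost}_v(T'\cup\set{t})-\mathrm{cost}_v(T')\le \mathrm{cost}_v(T\cup\set{t})-\mathrm{cost}_v(T)$, i.e. that demoting one more neighbour $t$ becomes no costlier as the repair set grows. The intuition is that the ``barrier'' $t$ must cross is the fixed block $S_v$, and the only saving when moving $t$ downward comes from neighbours that are themselves demoted (co-demoted elements need not be crossed), so enlarging the repair set can only reduce the marginal work for $t$. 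Turning this exchange argument into a rigorous statement against the \emph{true} optimum -- which may raise elements of $S_v$ rather than only lower elements of $T$ -- is the delicate point, and is where I expect the bulk of the technical effort to lie.
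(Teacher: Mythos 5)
Your overall architecture is the same as the paper's: check feasibility (every blocking edge needs a saturated endpoint), decompose the swap cost list-by-list, introduce a choice variable saying on which side each doubly-saturated blocking edge is repaired, and minimize the resulting sum via Theorem~\ref{thm:schrijver}. Your handling of the forced edges by contracting them into base sets $F_v$ is a slightly cleaner equivalent of the paper's $\infty$-penalty functions $g'_j$, and the closure properties you invoke (restriction, complementation, lifting, sums) are all standard and correct. However, there is a genuine gap, and you have identified it yourself: you never prove that the per-list cost $\mathrm{cost}_v$ is submodular. This is not a technicality to be deferred --- it is the entire mathematical content of the theorem; everything else is routine reduction machinery. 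As written, Theorem~\ref{thm:schrijver} cannot be invoked and the proof is incomplete.

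The paper closes exactly this gap by first deriving a closed-form expression for the per-list cost (Claim~\ref{clm:value_sv}): partitioning a list into red (to be demoted), blue (matched) and black (the rest) vertices, the minimum number of swaps equals the number of red--blue inversions plus $\sum_{\text{black } j}\min\{B_F(j),R_F(j)\}$, where $B_F(j)$ counts blue vertices below $j$ and $R_F(j)$ counts red vertices above $j$. The formula is established by a matching lower bound ($s(j)\ge\min\{B_F(j),R_F(j)\}$ for each black vertex) and an explicit three-rule swap procedure attaining it; this is precisely what dissolves your worry about ``the true optimum, which may raise elements of $S_v$'' --- once the optimum has a closed form, no exchange argument against arbitrary optimal reorderings is needed. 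Submodularity then follows by a short calculation: the marginal cost of turning a black vertex $a^i_j$ red is $\max\{0,B_F(j)-R_F(j)\}+N_F(j)$, where $N_F(j)$ counts black vertices below $j$ whose $\min$ flips from the $R$-branch; both terms are monotone non-increasing in $F$ because adding edges to $F$ only increases $R$-values and leaves $B$-values unchanged. Note that your informal argument (``the marginal work for $t$ can only shrink'') misses this second term entirely: adding $t$ to the repair set also changes the optimal routing of the black vertices below $t$, and any correct proof must account for that interaction, which is exactly what the closed-form formula makes visible.
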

	\begin{proof}
		
		Recall that $n=|A|$, $n'=|B|$, $A= \{a_1, \ldots, a_n\}$, and $B= \{b_1, \ldots, b_{n'}\}$. We first construct the extended graph \(\mathcal{G}_G\) with $\cA=A_1\cup \ldots \cup A_{n}$ and  \(\cB=B_1\cup \ldots \cup B_{n'}\). Recall that $A_i=\{a^i_1,\ldots, a^i_{d_G(a_i)}\}$ corresponds to the vertex $a_i\in A$ and  $B_j=\{b^j_1,\ldots, b^j_{d_G(b_j)}\}$ corresponds to the vertex $b_j\in B$.
		Note that $S$ induces a matching in \(\mathcal{G}_G\) and we denote this matching $M$.
		
		Our goal is to determine a minimal sequence of swaps making $M$ stable.
		Let $E_b$ be the set of blocking edges of $M$. Furthermore, let $E^A_b\subseteq E_b$
		be the subset of blocking edges of $M$ such that for every edge $a^i_pb^j_q\in E^A_b$
		it holds that $d_S(a_i) = c(a_i)$. Note that if $e=a^i_pb^j_q \in E_b \setminus E^A_b$, then $e$
		cannot become
		non-blocking by permuting $A_i$. Similarly let $E^B_b\subseteq E_b$
		be the subset of blocking edges for $M$ such that for every edge $a^i_pb^j_q\in E^B_b$
		it holds that $d_S(b_j) = c(b_j)$. Note that if $E^A_b\cup E^B_b\neq E_b$, then there is 
		a blocking edge in $E_b$ such that both endpoints of the edge do not have full capacity
		and $S$ cannot be made stable. 
		
		
		The idea of the proof is that making $M$ stable means
		unblocking every blocking edge, where unblocking a blocking edge corresponds to
		moving at least one of its endpoints below the vertices matched by $M$.
			Define a cost function $h: \mathcal{P}(E^A_b) \to \mathbb{N}$
		that maps a set $F \subseteq E_b$ of blocking edge
		to the minimum number of swaps to fix every edge in $F$ by moving its endpoint in $\cA$ and every other blocking edge by moving its endpoint in $\cB$.
		

		We define for every \(i\in \{1,2,\ldots, n\} \) 
		the function $f_i\colon \mathcal{P}(E_b) \to \mathbb{N}$
		such that 
		for every \(F\subseteq E_b\), the value \(f_i(F)\) is equal to the minimum
		number of swaps in $A_i$ moving every vertex in \(A_i\cap\{u \mid uv\in F\}\) below 
		the matched vertices, i.e., the vertices in \(A_i\cap \bigcup_{e\in M}e\). Furthermore,
		for every \(j\in \{1,2,\ldots, n'\} \) we define the function
		$g_j\colon \mathcal{P}(E_b) \to \mathbb{N}$\ such that 
		for every \(F\subseteq E_b\), the value \(g_i(F)\) is equal to the minimum
		number of swaps in $B_j$ moving every vertex in \(B_j\cap\{u \mid uv\in F\}\) below 
		the matched vertices, i.e., the vertices in \(B_j\cap \bigcup_{e\in M}e\).
		
		
		Now, let us consider a minimal sequence of swaps  $\sigma_1, \dots, \sigma_k$ making
		$M$ stable, and let $F$ be the subset of 
		blocking edges that are fixed using swaps in $\cA$; i.e., if $a^i_pb^j_q\in F$, then $d_S(a_i)=c(a_i)$ and \(a^i_p\) is below every matched vertex in $A_i$ after applying the sequence of swaps $\sigma_1, \dots, \sigma_k$ and if $a^i_pb^j_q\in E_b\setminus F$, then $d_S(b_j)=c(b_j)$ and \(b^j_q\) is below every matched vertex in $B_j$ after applying the sequence of swaps $\sigma_1, \dots, \sigma_k$. Clearly, $E_b\setminus E^B_b \subseteq F\subseteq E^A_b$.
		It is easy to see that the number of swaps in the sequence  $\sigma_1, \dots, \sigma_k$ that are in $A_i$ is at least $f_i(F)$ and the number of swaps that are in $B_j$ is at least  $g_{j}(E_b \setminus F)$.
		Therefore, the total number of swaps $k$ is at least 
		$h(F) = \sum_{i\in [n]} f_i(F) + \sum_{j\in [n']} g_{j}(E_b \setminus F)$. On the other 
		hand, it is easy to see that for any $F'$ such that \(E_b\setminus E^B_b \subseteq F'\subseteq E^A_b\) it is always possible to
		make $M$ stable with at most $h(F')$ many swaps.  Hence $k=h(F)$ and to find 
		a minimum length sequence of swaps, it suffices to find $F$ such that  $E_b\setminus E^B_b \subseteq F\subseteq E^A_b$ minimizing $h(F)$.
		To avoid the constraint $E_b\setminus E^B_b \subseteq F$, let $J'\subseteq [n']$ be the set of indices such that $j \in J'$ if and only if $d_S(b_j) < c(b_j)$. For every $j\in J'$ we define a function $g'_j\colon \mathcal{P}(E_b)\rightarrow \mathbb{N}\cup \{\infty\}$ such that for every $F\subseteq E_b$, we have $g'_j(F)=0$ if \(B_j\cap\{u \mid uv\in F\} = \emptyset\) and $g'_j(F)=\infty$\footnote{Here infinity can be replaced by some large number, e.g., $(nn')^2$ would suffice.} otherwise. We define $h'\colon \mathcal{P}(E^A_b)\rightarrow \mathbb{N}\cup \{\infty\}$ such that $h'(F)=h(F)+\sum_{j\in J'} g'_{j}(E_b \setminus F)$. Note that for every $F$ such that \(E_b\setminus E^B_b \subseteq F\subseteq E^A_b\), we have $h'(F)=h(F)$. On the other hand if $a^i_pb^j_q\in (E_b\setminus E^B_b)\setminus F$, then  $a^i_pb^j_q\in E_b\setminus F$ and $g'_j(F)=\infty$ implying $h'(F)=\infty$. 
		
		We will show that $h'$ is submodular and for fixed \(F\subseteq E_b\) we can compute $h'(F)$ in polynomial time. Then we can use Theorem~\ref{thm:schrijver}
		to minimise $h'$ in polynomial time.
		
		Elementary operations on submodular functions show that it suffices to
		prove that every function $f_i$ as well as every function $g_j$ and every function $g'_j$ are submodular.
		It is easy to see that $g'_j$ is submodular and computable in polynomial time. 
		The proof for a function $g_j$ is analogous to the proof of 
		submodularity of function $f_i$ and hence it is omitted. Let us now 
		fix \(i\in [n]\) for the rest of the proof. 

		For the ease of explanation, we partition $A_i$ into three kinds
		of vertices:
		\begin{enumerate}[label=(\roman*)]
			\item the \emph{red} vertices -- the endpoint of
			a blocking edge in $F$,
			\item the \emph{blue} vertices -- the matched vertices in $M$, and
			\item the \emph{black} vertices -- the rest.
		\end{enumerate}

		First, we make some basic observations:
		\begin{itemize}
			\item red vertices never need to go up,
			\item blue vertices never need to go down,
			\item two vertices of the same kind never need to be swapped,
			\item if $a^i_r$ is red and $a^i_b$ blue below $a^i_r$, then
			$a^i_r$ and $a^i_b$ must be swapped.
		\end{itemize}
		
		For any vertex $a^i_j$, we denote by
		$B_{F}(j)$ the number of blue vertices below $a^i_j$
		and $R_{F}(j)$ the number of red vertices above $a^i_j$.

		The key observation is the following claim.
		
		\begin{claim}\label{clm:value_sv}
			$f_i(F) = \sum_{a^i_r \text{ is red}} B_{F}(r) + \sum_{a^i_j \text{ is black}}
			\min\{B_{F}(j),R_{F}(j)\}.$
		\end{claim}
		
		\begin{proof}
Let $s(j)$ be the number of swaps with the vertex $a^i_j$. 
Observe that $$f_i(F) = \sum_{a^i_j \in A_i, a^i_j \text{ is black}} s(j) + N,$$
where $N$ is the number of swaps between a red and a blue vertex,
but $$N = |\{(a^i_r,a^i_b) \mid a^i_r \text{ is red, }a^i_b \text{ is blue, and }
a^i_r \text{ is above } a^i_b\}|,$$ which is equal to $$\sum_{a^i_r \text{ is red}} B_{F}(r)= \sum_{a^i_b \text{ is blue}}R_{F}(b)$$ and can be computed in time $\bigO(n^2)$. Note that given \(F\) the value $N$ is fixed. Hence, it suffices to minimize $\sum_{a^i_j \text{ black}} s(j)$.
			
		This is because our goal is to find the minimum length sequence of swaps 
				such that all red vertices are below the blue vertices. 
				First, it is easy to see that $s(j) \ge \min\{B_{F}(j),R_{F}(j)\}$,  because 
				we have either swap with \(a^i_j\) all the blue vertices below $a^i_j$, so they end up above $a^i_j$
				or all the red vertices that are above $a^i_j$, so they end up below $a^i_j$.
				
					It remains to give a sequence of swaps such that
			$s(j) = \min\{B_{F}(j),R_{F}(j)\}$. One can verify that the following procedure gives such a sequence:
			while at least one of the following rules can be applied, perform it
			\begin{itemize}
				\item if $a^i_r$ is a red vertex, $a^i_b$ blue vertex such that $r=b-1$, that is $a^i_r$ is a predecessor of $a^i_b$, 
				swap $a^i_r$ and $a^i_b$,
				\item if $a^i_j$ is black, $a^i_{j+1}$ is blue, and $B_{F}(j) \leq R_{F}(j)$, then swap $a^i_j$ and $a^i_{j+1}$,
				\item if $a^i_j$ is black, $a^i_{j-1}$ red, and $R_{F}(j) < B_{F}(j)$, then swap $a^i_j$ and $a^i_{j-1}$.
			\end{itemize}
		
				Let us fix a black vertex $a^i_j$. It is easy to show by induction on $\min\{B_{F}(j),R_{F}(j)\}$ that the number of swaps 
				of this vertex is exactly $\min\{B_{F}(j),R_{F}(j)\}$. Clearly if $\min\{B_{F}(j),R_{F}(j)\} = 0$, then $a^i_j$ never swaps with any vertex in the above sequence. If $R_{F}(j) < B_{F}(j)$, then $a^i_j$ might swap with red vertex at position $a^i_{j-1}$. In this case the black vertex we consider becomes $a^i_{j-1}$ and the number of blue vertices below remains the same and the number of red vertices above decreases by one and $s(j) = \min\{B_{F}(j),R_{F}(j)\}$ by induction. Similarly, if $R_{F}(j) \ge B_{F}(j)$, $a^i_j$ might swap with blue vertex at position $a^i_{j+1}$ and $s(j) = \min\{B_{F}(j),R_{F}(j)\}$ follows by induction as well.
				It remains to show that if none of these rules can be applied, then
				there is no red vertex above a blue one.
				For the sake of  contradiction let us assume that none of the rules apply and there is some red vertex above a blue vertex. 
				Let $a^i_r$ be a red vertex and $a^i_b$ a blue vertex such that $r<b$ (i.e., $a^i_r$ is above $a^i_b$) and all vertices between $a^i_r$ be a red vertex and $a^i_b$ are black. Clearly $a^i_{r}\neq a^i_{b-1}$, else the first rule applies. Hence both vertices $a^i_{r+1}$ and $a^i_{b-1}$ are black (possibly $r+1=b-1$). Now, all vertices between $a^i_{r+1}$ and $a^i_{b-1}$ are black and it follows that $B_{F}(r+1)=B_{F}(b-1)$ and $R_{F}(r+1)=R_{F}(b-1)$. Since the second rule does not apply, it follows that $B_{F}(b-1) > R_{F}(b-1)$. However that implies $R_{F}(r+1) < B_{F}(r+1)$ and the third rule applies, which is a contradiction.
				%
				%
				%
				%
				%
		\end{proof}	
		
		Let  $N_{F'}(j)$ be the number of black
vertices $a_b^i$ below $a_j^i$ such
that $R_{F'}(b) <B_{F'}(b)$. From Claim~\ref{clm:value_sv} we can now deduce that
			if $e,e' \in E_b$ such that $e'\cap A_i = a^i_j$, then
			\begin{align*}
				f_i(F&+e+e') - f_i(F+e)  \\ 
				&= B_{F+e+e'}(j) - \min\{B_{F+e}(j), R_{F+e}(j)\} + N_{F+e}(j)\\
				&= B_{F+e}(j) - \min\{B_{F+e}(j), R_{F+e}(j)\} + N_{F+e}(j) \\
				&= \max\{0, B_{F+e}(j) - R_{F+e}(j)\} + N_{F+e}(j) \\
				&\leq \max\{0, B_{F}(j) - R_{F}(j)\} + N_F(j)\\
				&=f_i(F+e') - f_i(F).
			\end{align*}
			The above inequality follows because $R_{F+e}(j) \geq R_{F}(j)$ and $B_{F+e}(j)=B_{F}(j)$. Furthermore $N_{F+e}(j) \leq N_F(j)$ since $R_F(b)
\leq R_{F+e}(b)$ and $B_{F+e}(b)=B_F(b)$.
		It follows that $f$ is computable in polynomial time and submodular.
		We can now efficiently minimise the total cost $h'$ 
		by the algorithm of Theorem~\ref{thm:schrijver}
		and thus compute in polynomial time a minimal sequence of swaps
		making $M$ and hence $S$ stable.
	\end{proof}

	\section{Hardness results}\label{sec:NP_hardness}
	
	\newcommand{\smMinSwaps}{\textsc{Swap Distance to Perfect Stable Matching}}
	\newcommand{\psmd}{$\Delta$-PSM}
	In this section we consider the problem of finding a minimum length sequence of swaps that leads to an arbitrary maximum stable matching. Note that  even though we can find minimum length sequence for a fixed maximum matching, there might be exponentially many different maximum matchings, so the algorithms from the previous section cannot be applied to solve this problem. Indeed, we will show that the problem of deciding whether there exists a sequence of at most $k$ swaps that leads to an instance with a perfect stable matching is NP-hard and W[1]-hard parameterized by $k$. We will call this problem \smMinSwaps\ (\psmd).  
	\EE{Note that Chen et al. \cite{ChenSS19} proved that \psmd\ is W[1]-hard if $G$ has a perfect matching and the parameter is $n_u,$ the number of vertices unmatched in any stable matching of $G$. The following two propositions show that W[1]-hardness with parameter $k$ is strictly stronger than the result by Chen et al. \cite{ChenSS19}.}
	
	\begin{proposition}\label{prop:comparison_to_chen_at_all}
		\EE{Let $I=(G, L)$ be an SMP instance, $k$ the minimum number of swaps that leads to an instance with a perfect stable matching and $n_u$ the number of vertices unmatched in any stable matching of $G$, then $n_u\le 2k$.}
	\end{proposition}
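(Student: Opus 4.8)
The plan is to reduce the statement to a single quantitative claim about how much one swap can change the size of a stable matching, and then telescope along the optimal swap sequence. First I would record two consequences of the Rural Hospitals Theorem. For any instance $(G,L')$ all stable matchings have the same size, so the number $n_u(L')$ of vertices unmatched in a stable matching of $(G,L')$ is well defined; and since $G$ is bipartite this number equals $|A|+|B|-2\mu(L')$, where $\mu(L')$ denotes the common size of a stable matching of $(G,L')$. Let $\sigma_1,\dots,\sigma_k$ be an optimal sequence turning $(G,L)$ into an instance with a perfect stable matching, and write $L_0=L$ and $L_t=\sigma_t(L_{t-1})$, so that $n_u(L_0)=n_u$ and $n_u(L_k)=0$. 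Telescoping then gives
\[
n_u=\sum_{t=1}^{k}\bigl(n_u(L_{t-1})-n_u(L_t)\bigr)=2\sum_{t=1}^{k}\bigl(\mu(L_t)-\mu(L_{t-1})\bigr).
\]
(If no finite number of swaps produces a perfect stable matching then $k=\infty$ and the inequality is vacuous, so we may assume $k$ finite.)

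Hence it suffices to prove the \emph{key claim}: a single swap increases the size of a stable matching by at most one, i.e.\ $\mu(L_t)\le\mu(L_{t-1})+1$ for every $t$. Granting this, $\mu(L_k)-\mu(L_0)\le k$, and the displayed identity yields $n_u=2\bigl(\mu(L_k)-\mu(L_0)\bigr)\le 2k$, as required. To attack the key claim, fix the swap in the list of an agent $v$ (say $v\in A$) exchanging two consecutive neighbours $b,b'$, where after the swap $v$ prefers $b'$ to $b$, and let $M'$ be a stable matching of $(G,L_{t-1})$. Since the two instances agree on every preference except the relative order of $b$ and $b'$ in $\ell(v)$, and an adjacent transposition only reverses the order of the two elements it exchanges, the only edge that can block $M'$ after the swap but did not block it before is $vb'$; a short case analysis shows this occurs only when $M'(v)=b$. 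If no such new blocking edge appears, then $M'$ is stable in $(G,L_t)$ and the two sizes coincide; otherwise $M'$ is stable in $(G,L_t)$ except for the single blocking pair $vb'$.

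The main obstacle is the remaining case: I must show that repairing this lone blocking pair changes the size of the stable matching by at most one. Here I would invoke the rotation structure of the stable-matching lattice. The single blocking pair $vb'$ exposes exactly one rotation, and passing from $M'$ to a stable matching of $(G,L_t)$ amounts to eliminating (equivalently, introducing) this one rotation. A cyclic rotation merely permutes partners among agents that are matched both before and after, leaving the count of matched vertices unchanged; the count can move only if the alternating chain triggered by $vb'$ terminates at a currently unmatched agent, in which case at most one new vertex on each side becomes matched and $\mu$ grows by at most one. Making this last step fully rigorous is the delicate part, and I would do so by running the Roth--Vande Vate / deferred-acceptance process from $M'$ in $(G,L_t)$ and tracking that the propagation of the single change $vb'$ forms one alternating path or cycle; the symmetric-difference and rotation analyses already implicit in the works of Chen et al.~\cite{ChenSS19} and Bredereck et al.~\cite{BredereckCKLN20} supply exactly the needed toolkit.
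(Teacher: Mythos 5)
Your telescoping reduction is sound, and your preliminary analysis is correct: after a swap $b \leftrightarrow b'$ in $\ell(v)$ the only possible new blocking edge is $vb'$, and only when $v$ is matched to $b$. But the proof of your key claim (one swap changes the Rural-Hospitals-invariant stable matching size $\mu$ by at most one) has a genuine gap, which you yourself flag (``making this last step fully rigorous is the delicate part''). The failing step is the appeal to rotations and to the Roth--Vande Vate process. Rotations are defined relative to the lattice of stable matchings of a \emph{fixed} instance; here $M'$ is \emph{not} stable in $(G,L_t)$, so ``the single blocking pair $vb'$ exposes exactly one rotation'' is not a well-defined statement, and there is no off-the-shelf lemma asserting that repairing one blocking pair moves the stable matching size by at most one. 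Likewise, satisfying $vb'$ frees both old partners of $v$ and of $b'$ simultaneously, and the ensuing cascade need not form a single alternating path or cycle --- it can branch and revisit agents --- so ``tracking the propagation'' is exactly the unproven heart of the matter, not a routine verification.

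The gap can be closed, but only by the alternating-path induction that the paper actually uses and your proposal never carries out. Concretely: take $M'$ stable in $(G,L_{t-1})$ and $M''$ stable in $(G,L_t)$, suppose $|M''|\ge |M'|+2$; then $M'\cup M''$ contains at least two vertex-disjoint augmenting paths whose endpoints are unmatched by $M'$, and an induction along each path (alternately invoking stability of $M''$ and the fact that $M'$ has $vb'$ as its only blocking edge in $L_t$) shows every such path must contain the edge $vb'$ or a vertex whose list was swapped; since only $v$'s list changed and the paths are vertex-disjoint, at most one such path can exist --- contradiction. This is precisely the paper's argument, except the paper applies it once, globally: it compares the initial stable matching $M$ with the final perfect stable matching, extracts at least $n_u/2$ vertex-disjoint augmenting paths from their union, and proves by that induction that each path must contain a vertex whose list was swapped, giving $k\ge n_u/2$ directly. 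So your telescoping does not bypass the paper's combinatorial core; it merely localizes it to the case $k=1$, and there your proposal substitutes an inapplicable toolkit for the missing induction.
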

	
	\begin{proof}
\EE{Let $M$ be a stable matching in $I$ and $M'$ a perfect stable matching in an instance $I'$ obtained from $I$ by exactly $k$ swaps. Note that $M\cup M'$ induces a union of cycles and paths. Moreover, $M\cup M'$ has to contain at least $|M'| - |M| = \frac{n_u}{2}$ paths that start and end with an edge in $M'$ and the consecutive edges alternate between an edge in $M$ and an edge in $M'$. Note that any such alternating path starts and ends in a vertex unmatched by $M$. Moreover, the set of unmatched vertices is the same in any stable matching of $I$ due to the Rural Hospitals Theorem. Let $P=v_1v_2v_3\ldots v_{q}$ be one such alternating path, where $v_1$ and $v_q$ are unmatched by $M$. The edges} \GG{$v_{2i-1}v_{2i}$, for $i\in [\frac{q}{2}]$, are in $M'$ and the edges $v_{2i}v_{ 2i+1}$, for $i\in [\frac{q-2}{2}]$, are in $M$.} 
\EE{We show that there is a swap in a preference list of at least one vertex of $P$. The proposition then follows from the fact that there are at least $\frac{n_u}{2}$ such paths and these paths are vertex disjoint.} 
		
		\EE{For the sake of contradiction, let us assume that there is no swap in a preference list of any vertex on $P$. First note that since $v_1$ is unmatched by $M$, the vertex $v_2$ prefers $v_3$ to $v_1$, otherwise $v_1v_2$ would be a blocking edge of $M$. We now show by induction that since there is no swap in a preference list of any vertex of $P$, then $v_2$ also prefers $v_1$ to $v_3$, which is a contradiction. Since, $v_q$ is unmatched by $M$ and $v_{q-2}v_{q-1}\in M$, the vertex $v_{q-1}$ prefers $v_{q-2}$ to $v_{q}$. As an induction hypothesis assume that for some $i$, $2 < i < q$, the vertex $v_i$ prefers $v_{i-1}$ to $v_{i+1}$. Edges $v_{i-2}v_{i-1}$ and $v_{i}v_{i+1}$ are together in a stable matching. Since by the induction hypothesis $v_i$ prefers $v_{i-1}$ to $v_{i+1}$, if the vertex $v_{i-1}$ prefers $v_{i}$ to $v_{i-2}$, then the edge $v_{i-1}v_i$ is blocking for this stable matching. We conclude that $v_{i-1}$ prefers $v_{i-2}$ to $v_{i}$ and the induction step holds. Applying repeatedly the induction step, we obtain that $v_2$ prefers $v_1$ to $v_3$, which is the desired contradiction.} 
	\end{proof}


\AY{
\begin{proposition}\label{prop:comparison_to_chen_at_all2}
For every $n\in\mathbb{N}$, $n\ge 1$, there exists an SMP instance $(G, L)$ such that $G$ is a bipartite graph with $2n+2$ vertices that admits a perfect matching, only two vertices are unmatched in the unique stable matching of $G$ and the minimum number of swaps that leads to an instance with a perfect stable matching is at least $n$.
\end{proposition}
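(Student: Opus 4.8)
The plan is to exhibit an explicit family of instances supported on a path. Take $A=\{a_0,\dots,a_n\}$ and $B=\{b_0,\dots,b_n\}$ and let $G$ be the path
\[
a_0-b_0-a_1-b_1-a_2-\cdots-a_n-b_n ,
\]
whose $2n+2$ vertices split its edges into the perfect matching $M^*=\{a_ib_i:0\le i\le n\}$ and the size-$n$ matching $M=\{a_{i+1}b_i:0\le i\le n-1\}$, the latter leaving only $a_0$ and $b_n$ unmatched. I would then choose the preference lists so that every internal vertex prefers its $M$-neighbour: each $a_i$ with $1\le i\le n$ sets $\ell(a_i)=(b_{i-1},b_i)$, and each $b_i$ with $0\le i\le n-1$ sets $\ell(b_i)=(a_{i+1},a_i)$ (the degree-one vertices $a_0,b_n$ have lists of length one).

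First I would check that $M$ is stable: the only non-matching edges are those of $M^*$, and for each such edge $a_ib_i$ the $A$-endpoint strictly prefers its $M$-partner, so no edge blocks $M$. Next I would prove that $M$ is the \emph{unique} stable matching. By the Rural Hospitals Theorem the unmatched set is identical in every stable matching, so $a_0$ and $b_n$ are unmatched in all of them; the degree-one vertex $b_0$ is then forced onto $a_1$, which pushes $a_1$ off $b_1$, forcing $b_1$ onto $a_2$, and so on down the path, so the stable matching is uniquely determined and equals $M$. In particular exactly two vertices are unmatched, as required.

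The heart of the argument is the lower bound. Since $G$ is a path on an even number of vertices, $M^*$ is its unique perfect matching; hence any instance $(G,L')$ reachable from $(G,L)$ by swaps that admits a perfect stable matching must have $M^*$ as that matching, and it suffices to show that making $M^*$ stable costs at least $n$ swaps. I would compute the blocking edges of $M^*$ under $L$: for each $i$ the edge $a_{i+1}b_i$ is blocking, because $a_{i+1}$ prefers $b_i$ to its partner $b_{i+1}$ and $b_i$ prefers $a_{i+1}$ to its partner $a_i$. These $n$ blocking edges are pairwise vertex-disjoint, since they are exactly the edges of $M$. To make $a_{i+1}b_i$ non-blocking one must reverse the order in $\ell(a_{i+1})$ or in $\ell(b_i)$, i.e.\ perform at least one swap in a length-two list at one of its two endpoints; because the endpoint pairs $\{a_{i+1},b_i\}$ are disjoint across $i$, no single swap can help two different blocking edges, so at least $n$ swaps are required (and $n$ suffice, e.g.\ by reversing every $\ell(a_i)$).

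The main obstacle is making this lower bound fully rigorous: arguing that swaps cannot be ``shared'' between blocking edges and that creating new blocking edges cannot decrease the count. The disjointness of the blocking edges (they form the matching $M$) is precisely what keeps this clean, and since any additional blocking edges only increase the number of swaps needed, the bound $\ge n$ is safe. This instance thus has $n_u=2$ while the swap distance is at least $n$, complementing Proposition~\ref{prop:comparison_to_chen_at_all}.
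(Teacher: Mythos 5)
Your construction and argument are exactly the paper's: the same path on $2n+2$ vertices with the same preferences (the paper writes it as $v_0v_1\cdots v_{2n+1}$, where $v_{2i}=a_i$ and $v_{2i+1}=b_i$), the same unique stable matching $M$, and the same lower bound obtained from the $n$ vertex-disjoint blocking edges of the unique perfect matching $M^*$, each of which needs a swap in a list at one of its own two endpoints. Two harmless slips worth fixing: $b_0$ has degree two (the degree-one vertices are $a_0$ and $b_n$), and the edge $a_0b_0$ fails to block $M$ because of $b_0$'s preference, not because of the unmatched vertex $a_0$; neither affects correctness.
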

\begin{proof}
Let $G$ be a path of length $2n+1.$ That is, $V(G)=\{v_0,v_1,\ldots, v_{2n+1}\}$ and $E(G)=\{v_0 v_1, v_1 v_2, v_2 v_3, \ldots, v_{2n} v_{2n+1}\}.$  Assume that for all $i\in [n]$, 
$v_{2i}$ prefers $v_{2i-1}$ over $v_{2i+1}$ and  $v_{2i-1}$ prefers $v_{2i}$ over $v_{2i-2}$. 
  Now $M=\{v_1 v_2, v_3 v_4, \ldots, v_{2n-1} v_{2n}\}$ is the only stable matching in $G$.   If we want $M'=\{v_0 v_1, v_2 v_3, \ldots, v_{2n} v_{2n+1}\}$ to become a stable matching then we need to change the preference of $v_{2i-1}$ or $v_{2i}$ for every  $i\in [n]$, as otherwise $v_{2i-1}v_{2i}$ would be a blocking edge.
Hence, $k \geq n.$
\end{proof}
}
	
Gupta et al.	\cite{gupta2020parameterized} studied a related problem called \textsc{Almost Stable Marriage (ASM)}, which takes as an input an instance $(G,L)$ of SMP and two non-negative integers $k$ and $t$ and asks whether there is a matching whose size is at least $t$ more than the size of a stable matching in $G$ such that the matching has at most $k$ blocking edges. The authors show that ASM is W[1]-hard parameterized by $k+t$ even when the input graph $G$ has max degree $3$. While they did not state it explicitly their hardness proof implies the following result.
	
	\begin{theorem}[\cite{gupta2020parameterized}]\label{thm:ASM_hardness}
		ASM is NP-hard and W[1]-hard parametrized by $k+t$ even on instances $\mathcal{I}=(G,L,k,t)$ such that 
		\begin{enumerate}
			\item $G$ has max degree three,
			\item $G$ admits a perfect matching and a stable matching of size $\frac{|V(G)|}{2}-t$, and
			\item If $\mathcal{I}$ is a YES-instance, then in every perfect matching $M$ with at most $k$ blocking edges, every blocking edge is incident to a vertex of degree two.
		\end{enumerate}
		Moreover, ASM does not admit an $|V(G)|^{o(\sqrt{k+t})}$ algorithm, unless ETH fails.  
	\end{theorem}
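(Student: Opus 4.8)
The plan is to revisit the W[1]-hardness construction of Gupta et al.~\cite{gupta2020parameterized} for ASM and to verify that, with at most cosmetic adjustments, the instances it produces already satisfy the three structural conditions together with the ETH-based lower bound. Their reduction starts from a parameterized problem whose ETH lower bound carries the square-root behaviour we want in the conclusion: the natural candidate is a multicolored clique/independent-set style problem on $\ell$ color classes, for which there is no $n^{o(\ell)}$ algorithm under ETH. I would first isolate the source problem and fix the correspondence between its solutions (a choice of one vertex per class, plus consistency across pairs of classes) and perfect matchings of $G$ with few blocking edges. Since the source is itself NP-hard, reading the same polynomial-time reduction classically already yields the NP-hardness part of the statement.

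The two easy properties should be read off the gadget construction directly. For (1), each selection and consistency gadget is assembled from paths and small cycles in which every vertex has at most three incident edges, so the whole graph $G$ has maximum degree three; I would simply check this locally for each gadget type. For (2), the construction is arranged so that the canonical stable matching leaves exactly $2t$ vertices unmatched while $G$ itself admits a perfect matching; hence the stable matching has size $\frac{|V(G)|}{2}-t$, and the ASM target $t$ is exactly the gap between a stable matching and a perfect matching. Consequently a YES answer to ASM is equivalent to the existence of a \emph{perfect} matching of $G$ with at most $k$ blocking edges, which is precisely the form needed for the \psmd\ reduction downstream.

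The main obstacle is property (3): showing that in a YES-instance, every blocking edge of a perfect matching $M$ with at most $k$ blocking edges is incident to a vertex of degree two. This is the clause not made explicit in~\cite{gupta2020parameterized}, and it demands a gadget-by-gadget analysis of the preference lists. I would argue that each degree-three vertex is a ``hub'' whose preferences are set so that, in any perfect matching realizing an intended selection, it is matched to its most-preferred available neighbor; no edge incident to such a hub can then block $M$, since blocking requires the hub to prefer an unmatched or worse-matched neighbor. Any attempt to form a blocking edge between two degree-three vertices would force a violation of the selection/consistency structure, which in turn surfaces as strictly more than $k$ blocking edges elsewhere, contradicting the budget. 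Making this airtight — in particular ruling out deviant perfect matchings that stray from the intended ones yet still stay within the blocking-edge budget — is where the real work lies.

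Finally, the ETH lower bound follows from parameter bookkeeping. Encoding pairwise consistency among $\ell$ classes requires on the order of $\ell^2$ gadgets, so the produced parameters satisfy $k + t = \Theta(\ell^2)$, i.e.\ $\sqrt{k+t} = \Theta(\ell)$. An algorithm running in time $|V(G)|^{o(\sqrt{k+t})} = |V(G)|^{o(\ell)}$ would then decide the source problem in $n^{o(\ell)}$ time, contradicting ETH; the same quadratic relationship transfers W[1]-hardness in $\ell$ to W[1]-hardness in $k+t$. I would close by confirming that the reduction is computable in polynomial time and that the gap between the stable and perfect matching sizes is exactly $t$, as property (2) requires.
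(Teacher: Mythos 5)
Your proposal takes essentially the same route as the paper: the paper attributes this theorem to inspecting Gupta et al.'s reduction from \textsc{Multicolored Clique} on $q$ color classes, whose vertex-set/edge-set/connection gadgets yield $t=q+\binom{q}{2}$ and $k=q^2$, so $\sqrt{k+t}=\Theta(q)$ and the $|V(G)|^{o(\sqrt{k+t})}$ ETH bound follows from the $|V(G)|^{o(q)}$ bound for MCQ --- precisely your parameter bookkeeping, and likewise your reading of NP-hardness and W[1]-hardness from the same reduction. The structural properties, including the unstated property (3), are verified by gadget inspection in both your plan and the paper's treatment (the paper explicitly says the result is implicit in Gupta et al.'s proof rather than re-deriving it in detail), so your acknowledgment that the degree-two analysis is ``where the real work lies'' mirrors exactly what the paper leaves to the cited construction.
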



	
	To show that \psmd\ is NP-hard and W[1]-hard parameterized by $k$, we start with the following observation.
	
	\begin{observation}\label{obs:decrease_after_1_swap}
		Let $I=(G=(A,B;E), L)$ be an SMP instance, $M$ a matching such that there are exactly $k$ blocking edges of $M$ in $E$, and let $\sigma$ be a swap. Then there are at least $k-1$ blocking edges of $M$ in $E$ in \(\sigma(I)\).
	\end{observation}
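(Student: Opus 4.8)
The plan is to analyze how a single swap $\sigma$ can affect the set of blocking edges of a fixed matching $M$. Since $M$ is unchanged (a swap only alters preference lists, not $M$ itself), I would argue that a swap can ``un-block'' at most one blocking edge, which immediately gives that the number of blocking edges drops by at most one, i.e. $\sigma(I)$ has at least $k-1$ blocking edges.

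First I would recall the precise definition of a blocking edge: $ab\in E\setminus M$ is blocking if $a$ prefers $b$ to its $M$-partner (or is unmatched) and $b$ prefers $a$ to its $M$-partner (or is unmatched). A swap $\sigma$ is an adjacent transposition $u_{i-1}\leftrightarrow u_i$ in the preference list $\ell(v)$ of a single vertex $v$. The key structural observation is that such a swap changes the \emph{relative} order of exactly one pair of neighbors of $v$, namely $u_{i-1}$ and $u_i$, and leaves the relative order of every other pair of entries in every preference list (including all other lists) unchanged. Consequently, for any edge $e=ab$, the truth value of the condition ``$a$ prefers $b$ to its $M$-partner'' can only change if $v\in\{a,b\}$ and the swapped pair $\{u_{i-1},u_i\}$ involves the relevant comparison; an analogous statement holds for the condition on $b$.

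The main step is then to show that at most one edge can transition from blocking to non-blocking. I would argue as follows: for an edge $e=ab$ to become non-blocking after $\sigma$, at least one of its two defining preference conditions must flip from true to false. Say $\sigma$ acts on $\ell(v)$ swapping the entries in positions $i-1$ and $i$. The only comparison in $\ell(v)$ whose outcome changes is precisely the comparison between the entries formerly in positions $i-1$ and $i$; every vertex keeps the same $M$-partner, so the condition ``$v$ prefers $w$ to $M(v)$'' can flip only when $w$ is one of the two swapped neighbors \emph{and} $M(v)$ is the other. This pins down a unique candidate neighbor $w^\ast$ (the one demoted past $M(v)$), hence a unique edge $vw^\ast$ whose $v$-side condition can go from true to false. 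Any edge whose blocking status changes at the $v$-side must be this single edge $vw^\ast$, so at most one previously blocking edge can become non-blocking.

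The step I expect to require the most care is the bookkeeping at the boundary cases: when $v$ is unmatched in $M$ (so one of the preference conditions is vacuously true and cannot flip to false by a swap), and the verification that the demoted neighbor must be exactly the partner $M(v)$ for a flip to occur, so that the single demoted entry $w^\ast$ is forced. Once these are handled, the conclusion is immediate: no blocking edge can be created and destroyed in a way that nets a loss of more than one, so $\sigma(I)$ retains at least $k-1$ blocking edges of $M$.
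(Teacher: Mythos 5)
Your proof is correct and takes essentially the same approach as the paper's: a swap alters the relative order of exactly one pair of entries in exactly one preference list, so at most one blocking edge of $M$ can become non-blocking, leaving at least $k-1$. If anything, your write-up is slightly more precise than the paper's two-line argument, since you correctly identify the unique candidate as the edge $vw^\ast$ to the \emph{demoted} neighbor (possible only when the promoted entry is $M(v)$) and you explicitly handle the case where $v$ is unmatched.
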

\begin{proof}
			Swap \(\sigma\) increases preference of exactly one vertex $v$ in exactly one list of some vertex $u$. Hence the only blocking edge in the original instance that can become non-blocking is the edge $uv$. 
		\end{proof}
		
Now we are ready to show the hardness of \psmd.
	
	\begin{theorem}\label{thm:hardness_swap_distance}
		\psmd\ is NP-complete. Moreover, \psmd\ is W[1]-hard parameterized by the length $k$ of the sought sequence of swaps.
	\end{theorem}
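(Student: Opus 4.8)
The plan is to give a single polynomial-time, parameter-preserving reduction from \textsc{Almost Stable Marriage} (ASM) that simultaneously yields NP-hardness and W[1]-hardness in $k$, and to check NP-membership separately. I would start from an ASM instance $\mathcal{I}=(G,L,k,t)$ of the restricted form guaranteed by Theorem~\ref{thm:ASM_hardness}: $G$ has maximum degree three, admits a perfect matching, has a stable matching of size $\frac{|V(G)|}{2}-t$, and every blocking edge of a near-perfect solution is incident to a degree-two vertex. On such instances a matching of size at least (stable size) $+\,t$ has size at least $\frac{|V(G)|}{2}$ and is therefore perfect, so ASM is precisely the question of whether $G$ has a perfect matching with at most $k$ blocking edges. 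The reduction simply outputs the \psmd\ instance $(G,L)$ with swap budget $k'=k$, copying $G$ and $L$ verbatim. Since $k'=k\le k+t$, this is a valid parameterized reduction as well as a polynomial one, so correctness of both directions gives NP-hardness and W[1]-hardness parameterized by $k$ at once.

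For the forward direction I would take a perfect matching $M$ of $G$ with at most $k$ blocking edges and use condition~3 of Theorem~\ref{thm:ASM_hardness}: each blocking edge is incident to a degree-two vertex. If $ab$ is blocking and $a$ has degree two, then $a$ is matched by $M$ to its only other neighbour $a'$ and currently prefers $b$, so $\ell(a)=(b,a')$ and the single swap $b\leftrightarrow a'$ renders $ab$ non-blocking. The routine facts I would verify are that a degree-two vertex is incident to at most one blocking edge (its unique $M$-unmatched neighbour), that assigning each blocking edge to such an endpoint is therefore injective, and that a swap at $a$ alters only the preferences at $a$ and hence can neither reactivate another blocking edge nor create a new one. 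Performing one swap per blocking edge then makes $M$ a perfect stable matching using at most $k$ swaps, so $(G,L)$ is a YES-instance of \psmd.

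For the reverse direction I would lean on Observation~\ref{obs:decrease_after_1_swap}. Suppose swaps $\sigma_1,\dots,\sigma_k$ produce an instance with a perfect stable matching $M$; since swaps do not change $G$, $M$ is a perfect matching of $G$ that has no blocking edges in the final instance. Writing $I_0=I$ and $I_j=\sigma_j(I_{j-1})$, and letting $\beta(\cdot)$ count the blocking edges of $M$, Observation~\ref{obs:decrease_after_1_swap} gives $\beta(I_{j-1})\le \beta(I_j)+1$ for each $j$, and telescoping yields $\beta(I)\le \beta(I_k)+k=k$. Thus $M$ witnesses a perfect matching of $G$ with at most $k$ blocking edges in the original instance, i.e.\ $\mathcal{I}$ is a YES-instance of ASM. For NP-membership I would guess a perfect matching $M$ of $G$ and invoke the algorithm of Theorem~\ref{thm:capacitated_algorithm} with all capacities equal to one to compute, in polynomial time, the minimum number of swaps making $M$ stable, accepting iff this is at most $k$; correctness holds because any perfect stable matching reachable by swaps is a perfect matching of $G$.

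I expect the main obstacle to be the non-interference bookkeeping in the forward direction: one must be careful that the per-edge single swaps can be applied simultaneously, which requires the injectivity of the edge-to-endpoint assignment together with the fact that distinct swap vertices touch disjoint preference lists, so that all at most $k$ blocking edges are eliminated without any being reintroduced. Everything else — the equivalence of ASM on these instances to finding a perfect matching with few blocking edges, the telescoping via Observation~\ref{obs:decrease_after_1_swap}, and the NP-membership certificate — should follow directly from the stated results.
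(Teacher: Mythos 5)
Your proposal is correct and takes essentially the same approach as the paper: the identity (parameter-preserving) reduction from the restricted ASM instances of Theorem~\ref{thm:ASM_hardness}, with the forward direction handled by one swap per blocking edge at its degree-two endpoint (after which that vertex is matched to its top choice, so no new blocking edges arise) and the reverse direction obtained by telescoping Observation~\ref{obs:decrease_after_1_swap}. The only difference is a minor one in NP-membership, where you guess a perfect matching and verify it via Theorem~\ref{thm:capacitated_algorithm}, while the paper uses the swap sequence itself as the certificate together with the remark that polynomially many swaps always suffice; both are valid, and yours cleanly sidesteps any concern about the length of the swap sequence.
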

	\begin{proof}
Given a sequence of $k$ swaps, it is easy to verify that the instance obtained by applying the sequence of swaps admits a stable perfect matching. 
			Moreover, given a perfect matching $M$ of $G$, \GG{a polynomial in $n$ many swaps suffices to make $M$ stable}. Hence \psmd\ is in NP. 
			To show that the problem is NP-hard and W[1]-hard parameterized by $k$, we reduce from ASM. 
		Let $\mathcal{I}=(G,L,k,t)$ be an instance of ASM such that 
			\begin{enumerate}
				\item $G$ has max degree three,
				\item $G$ admits a perfect matching and a stable matching of size $\frac{|V(G)|}{2}-t$, and
				\item If $\mathcal{I}$ is a YES-instance, then in every perfect matching $M$ with at most $k$ blocking edges, every blocking edge is incident to a vertex of degree two.
			\end{enumerate}
			
			Recall that ASM remains NP-hard and W[1]-hard parameterized by $k+t$ on such instances by Theorem~\ref{thm:ASM_hardness}.   
		We let $\mathcal{I}'= (G,L,k)$ be an instance of \psmd\ and we will show that \(\mathcal{I}\) is a YES-instance of ASM if and only if \(\mathcal{I}'\) is a YES-instance of \psmd{}. Since, this construction can be performed in polynomial time (we just take the same instance) and the reduction is parameter-preserving, this implies that \psmd\ is NP-hard and W[1]-hard parameterized by $k$. 
		
		Observation~\ref{obs:decrease_after_1_swap} implies that if a graph $G$ does not admit a perfect matching $M$ with at most $k$ blocking edges, then $\cI'$ is a NO-instance of \psmd. As the instance \(\cI\) of ASM admits a perfect matching and a stable matching of size $\frac{|V(G)|}{2}-t$, it is easy to see that if \(\cI\) is a NO-instance of ASM, then $G$ does not admit a perfect matching $M$ with at most $k$ blocking edges and $\cI'$ is a NO-instance of \psmd.
		On the other hand, if \(\cI\) is a YES-instance of ASM, then  $G$ admits a perfect matching $M$ with at most $k$ blocking edges and, by propery 3 of \(\cI\), every blocking edge of $M$ is incident to a vertex of degree $2$. It is easy to see that if $e=uv$ is a blocking edge and $u$ has degree $2$, then it suffices to transpose the two vertices in \(l(u)\). This swap also does not introduce new blocking edges, as after this swap, vertex $u$ is matched with its most preferred choice. Therefore, in this case, there is a sequence of at most $k$ swaps that makes $M$ a stable perfect matching and \(\cI'\) is a YES-instance. 
	\end{proof}
	
	The ETH lower bound for ASM implies that \psmd\ does not admit an algorithm of running time of $(n+n')^{o(\sqrt{k})}$, where $k$ is the length of the sought solution. On the other hand, at each step there are only $2nn'$ possible swaps and so there is a simple $O((2nn')^{k})$ algorithm enumerating all possible sequences of at most $k$ swaps. By a small change to the reduction by Gupta et al. \cite{gupta2020parameterized} one can show that a significant improvement over the $(2nn')^{k}$ algorithm is unlikely. 
	
	In what follows we will sketch how to adapt the hardness proof by Gupta et al. \cite{gupta2020parameterized} to show that ASM, and hence also \psmd, does not admit an $(n+n')^{o(k/\log k)}$-time algorithm, unless ETH fails, and even existence of an $(n+n')^{o(k)}$-time algorithm would imply a breakthrough in the area of parameterized algorithms and complexity.
	
	The hardness proof of Gupta et al. \cite{gupta2020parameterized} is a reduction from \textsc{Multicolored Clique} (MCQ), where we are given a graph $G$ and a partition of $V(G)$ into $q$ parts $V_1,\ldots,V_q$; the goal is to decide the existence of a set $S\subseteq V(G)$ such that $|V_i\cap S|=1$, for all $i\in [q]$, and $G[S]$ induces a clique, that is, there is an edge between every pair of vertices in $G[S]$. It is well known that MCQ does not admit an $|V(G)|^{o(q)}$ algorithm, unless ETH fails. 
	The main idea of the reduction 
		is to introduce three types of gadgets - ``vertex set" gadgets, ``edge set" gadgets, and ``connection" gadgets. They introduce one ``vertex set" gadget for every vertex set $V_i$, one ``edge set" gadget for every $1\le i < j \le q$ that represents the set of edges between $V_i$ and $V_j$. Finally the "connection" gadgets connect a ``vertex set" gadget for $V_i$ with every "edge set" gadget for edges between $V_i$ and $V_j$, $j\neq i$. The parameter $t$ is then the number of ``vertex set" gadgets + the number of ``edge set" gadgets and $k$ is the number of ``vertex set" gadgets + 2 times the number of ``edge set" gadgets. Hence $t=q+\binom{q}{2}$ and $k=q^2$. Moreover, every perfect matching of ``vertex set" gadget forces at least one blocking edge inside the gadget that depends on the selected vertex for the clique in the corresponding vertex set. Every perfect matching of ``edge set" gadget forces at least two blocking edges inside the gadget that depend on the selected edge for the clique in the corresponding vertex set. Finally, a ``connection" gadget is just a set of edges between a ``vertex set" gadget and an ``edge set" gadget that contains blocking edge if the edge selected by the ``edge set" gadget is not incident to the vertex selected by the ``vertex set" gadget.
		
		Given this high-level description of  the hardness proof of \cite{gupta2020parameterized} for ASM parameterized by $k+t$, we sketch how one can adapt it to obtain $(n+n')^{o(k/\log k)}$ lower bound, under ETH. Namely, reduction will be analogous, but from a different problem. 
	In \textsc{Partitioned Subgraph Isomorphism} (PSI) we are given on the input 
	{two undirected graphs $G$ and $H$ with $|V(H)| \le |V(G)|$ ($H$ is \emph{smaller}) and a mapping $\psi\colon V(G) \to V(H)$ and the task is to determine whether
		{$H$ is isomorphic to a subgraph of $G$ (i.e., is there an injective mapping $\phi\colon V(H) \to V(G)$ such that $\{\phi(u),\phi(v)\} \in E(G)$ for each $\{u,v\} \in E(H)$ and $\psi \circ\phi$ is the identity)?} Observe that MCQ is PSI, where the smaller graph $H$ is a complete graph.
		
		\begin{theorem}[see {\cite{Marx10}} and \cite{EibenKPS19}]\label{cor:psi_hard}
			If there is an algorithm $\mathbb{A}$ and an arbitrary
			function $f$ such that $\mathbb{A}$ correctly decides every instance $(G,H)$
			of PSI with the smaller graph $H$ being 3-regular in time $f(|V(H)|)|V(G)|^{o(|V(H)|/\log |V(H)|)}$, then ETH fails.\footnote{We would like to point out that, as far as we know, it is open whether PSI admits even $f(|V(H)|)|V(G)|^{o(|V(H)|)}$ algorithm.}
		\end{theorem}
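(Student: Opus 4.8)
The plan is to derive this statement from Marx's optimality-of-treewidth lower bound for PSI, which asserts (under ETH) that no algorithm solves general PSI in time $f(H)\,|V(G)|^{o(\operatorname{tw}(H)/\log\operatorname{tw}(H))}$, where $\operatorname{tw}(H)$ is the treewidth of the pattern. That bound is itself obtained by a reduction from an ETH-hard source problem (3-SAT, equivalently 3-Coloring on bounded-degree graphs): an instance of size $N$ is encoded into a PSI instance whose pattern $H$ has treewidth $w=\Theta(N/\log N)$ and whose host $G$ has size polynomial in $N$, with PSI solutions in bijection with the solutions of the source instance. The $\log$ factor is exactly the ``an edge of $H$ selects one of $|V(G)|$ images, hence carries $\Theta(\log|V(G)|)$ bits'' packing, which lets a pattern of treewidth $w$ encode $\Theta(w\log w)$ bits of constraint; a hypothetical $|V(G)|^{o(w/\log w)}$ algorithm then composes into a $2^{o(N)}$ algorithm for the source problem, contradicting ETH.

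Two adaptations convert this into the stated 3-regular version. First, I would replace the treewidth parameter by the vertex count: it suffices to use hard instances whose patterns satisfy $\operatorname{tw}(H)=\Theta(|V(H)|)$, since then $\operatorname{tw}(H)/\log\operatorname{tw}(H)=\Theta(|V(H)|/\log|V(H)|)$ and the two exponents agree up to a constant. This is automatic if the patterns are taken to be expander-like, as 3-regular expanders have treewidth linear in their number of vertices. Second, I would force 3-regularity by a local degree-reduction gadget: each pattern vertex $v$ of degree $d$ is expanded into a small connected cubic gadget $\Gamma_v$ on $O(d)$ vertices carrying $d$ designated ports, one per incident edge, and the host graph $G$ together with the colouring $\psi$ are modified correspondingly so that injective, colour-respecting embeddings of the expanded pattern into the expanded host correspond exactly to those of the original instance. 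Restricting the source reduction to 3-regular expander patterns from the outset (as in~\cite{EibenKPS19}) is an equivalent and often cleaner route.

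I would carry out the steps in this order: (1) recall Marx's treewidth lower bound and the source reduction behind it; (2) verify that the produced patterns can be taken with $\operatorname{tw}(H)=\Theta(|V(H)|)$, converting the exponent to $|V(H)|/\log|V(H)|$; (3) apply the cubic gadget to both $H$ and $G$, checking that the partitioned-embedding correspondence is preserved; and (4) confirm that $|V(H)|$ grows by only a constant factor while $|V(G)|$ stays polynomial in $N$, so that a fast algorithm for 3-regular PSI yields a $2^{o(N)}$ algorithm for the source problem.

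The main obstacle is tracking the $\log$ factor through all steps simultaneously, since it is easy to weaken. The degree-reduction gadgets must inflate $|V(H)|$ by at most a constant factor, because a super-constant blow-up would turn $|V(H)|/\log|V(H)|$ into an asymptotically smaller quantity and degrade the bound; at the same time the host-side modification must keep $|V(G)|=\operatorname{poly}(N)$ so that $|V(G)|^{o(|V(H)|/\log|V(H)|)}$ still collapses to $2^{o(N)}$. Reconciling the cubic-gadget construction with the simultaneous need to preserve the treewidth-to-size ratio and the $\log$-savings packing of Marx's original reduction is the delicate technical heart, and is precisely what the construction in~\cite{EibenKPS19} supplies.
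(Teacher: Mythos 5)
Your derivation is correct and is essentially the same as what the paper relies on: the paper does not prove this theorem itself but imports it from \cite{Marx10} and \cite{EibenKPS19}, where it is obtained exactly by your ``cleaner route''---apply Marx's ETH-based lower bound for PSI with an arbitrary (recursively enumerable) pattern class of unbounded treewidth to the class of $3$-regular expanders, whose treewidth is $\Theta(|V(H)|)$, so that $\mathrm{tw}(H)/\log \mathrm{tw}(H) = \Theta(|V(H)|/\log|V(H)|)$. One caveat: your primary route via cubic degree-reduction gadgets is the more fragile of your two options (expanding a high-degree pattern inflates $|V(H)|$ to $\Theta(|E(H)|)$ and can destroy the linear treewidth-to-size ratio you need), but since Marx's theorem already quantifies over arbitrary pattern classes, that gadget step can simply be dropped.
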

		
		Note that the mapping $\psi\colon V(G) \to V(H)$ partitions the vertices of $V(G)$ into $q=|V(H)|$ many parts $V_1,\ldots, V_q$, each corresponding to a specific vertex of $H$. Moreover, we wish to select in each part $V_i$, $i\in [q]$, exactly one vertex $v_i$, such that if $uw\in E(H)$ and $V_i$ corresponds to $u$ and $V_j$ corresponds to $w$, then $v_iv_j$ is an edge in $G$.} The reduction from PSI to ASM is precisely the same as the reduction from MCQ to ASM, with only difference that we have ``edge set" gadgets only for pairs $1\le i < j \le q$ such that the sets $V_i$ and $V_j$ correspond to adjacent vertices of $H$. If $H$ is $3$-regular, then the number of ``edge set" gadgets is $\frac{3q}{2}$ and hence in the instance of ASM we obtain in the reduction, we get $t=\frac{5q}{2}$ and $k=4q$, implying that ASM does not admit $|V(G)|^{o((k+t)/\log(k+t))}$-time algorithm, unless ETH fails. Following the proof of Theorem~\ref{thm:hardness_swap_distance}, we then obtain the following result.
		
		\begin{theorem}\label{thm:ETH_lower_bound}
			If there is an algorithm that for every instance of SMP with a bipartite graph on $n+n'$ vertices decides whether there is a sequence of at most $k$ swaps that results in an instance of SMP with a perfect matching in time $(n+n')^{o(k/\log k)}$, then ETH fails.
		\end{theorem}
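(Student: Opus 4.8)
The plan is to prove the contrapositive by composing two reductions: the PSI-to-ASM construction outlined above and the parameter-preserving reduction from ASM to \psmd\ used in the proof of Theorem~\ref{thm:hardness_swap_distance}. Throughout, I would keep careful track of how the swap budget $k$ of the resulting \psmd\ instance scales with $q=|V(H)|$, the number of vertices of the PSI pattern graph, since the whole argument hinges on the relation $k=\Theta(q)$.

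First I would spell out the PSI-to-ASM reduction in full, for an input $(G,H,\psi)$ of PSI with $H$ being $3$-regular. Following Gupta et al., I would install one ``vertex set'' gadget for each of the $q$ parts $V_1,\dots,V_q$ induced by $\psi$, one ``edge set'' gadget for each of the $\frac{3q}{2}$ edges of $H$ (rather than for all $\binom{q}{2}$ pairs, as in the MCQ reduction), and the matching ``connection'' gadgets. I would then re-verify that the three structural properties required by Theorem~\ref{thm:ASM_hardness} survive the sparser edge-gadget set: that a partitioned copy of $H$ inside $G$ corresponds exactly to a perfect matching with at most $k$ blocking edges, that the instance still carries a stable matching of size $\frac{|V(G_{\mathrm{ASM}})|}{2}-t$ (property~2), and---the delicate point---that in every such perfect matching each blocking edge is incident to a degree-two vertex (property~3). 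Counting gadgets yields $t=q+\frac{3q}{2}=\frac{5q}{2}$ and $k=q+2\cdot\frac{3q}{2}=4q$, while the ASM graph clearly has $\mathrm{poly}(|V(G)|)$ vertices; in particular $q=\Theta(k)$.

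Next I would invoke the identity reduction of Theorem~\ref{thm:hardness_swap_distance}, which maps $(G_{\mathrm{ASM}},L,k,t)$ to the \psmd\ instance $(G_{\mathrm{ASM}},L,k)$ without altering the graph or $k$, so the final \psmd\ instance has $n+n'=\mathrm{poly}(N)$ vertices with $N:=|V(G)|$ and swap budget $k=4q$. Then, assuming for contradiction an algorithm for \psmd\ running in time $(n+n')^{o(k/\log k)}$, I would feed it this instance: since $(\mathrm{poly}(N))^{o(k/\log k)}=N^{o(k/\log k)}$ (the polynomial degree is a constant absorbed into the little-$o$) and $k/\log k=\Theta(q/\log q)$, the composed pipeline decides PSI with $3$-regular pattern in time $N^{o(|V(H)|/\log|V(H)|)}$---even without the $f(|V(H)|)$ factor permitted by Theorem~\ref{cor:psi_hard}---contradicting that theorem and hence refuting ETH. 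The main obstacle is the first step: faithfully reconstructing the Gupta et al.\ gadgetry for the restricted edge set and re-checking property~3, as every later step is then routine parameter arithmetic.
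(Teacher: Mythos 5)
Your proposal is correct and follows essentially the same route as the paper: reduce from PSI with a $3$-regular pattern $H$ by restricting the Gupta et al.\ ``edge set'' gadgets to the edges of $H$ (giving $t=\frac{5q}{2}$, $k=4q$, matching the paper's arithmetic), then compose with the identity reduction from ASM to \psmd\ of Theorem~\ref{thm:hardness_swap_distance} and invoke Theorem~\ref{cor:psi_hard}. Your explicit flagging of the need to re-verify the three properties of Theorem~\ref{thm:ASM_hardness} (especially property~3) for the sparser gadget set is exactly the part the paper itself only sketches, so the proposal is faithful to the paper's argument, including its level of rigor at that step.
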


\vspace{2mm}
{\bf Acknowledgement.} We are grateful to the referee for a number of helpful suggestions.


\begin{thebibliography}{10}

\bibitem{BalinskiR97}
M.~Balinski and G.~Ratier.
\newblock Of stable marriages and graphs, and strategy and polytopes.
\newblock {\em {SIAM} Rev.}, 39(4):575--604, 1997.

\bibitem{BalinskiR98}
M.~Balinski and G.~Ratier.
\newblock Graphs and marriages.
\newblock {\em American Mathematical Monthly}, 105(5):430--445, 1998.

\bibitem{BoehmerBHN20}
N.~Boehmer, R.~Bredereck, K.~Heeger, and R.~Niedermeier.
\newblock Bribery and control in stable marriage.
\newblock {\em Journal of Artificial Intelligence Research}, 71:993--1048,
  2021.

\bibitem{BredereckCKLN20}
R.~Bredereck, J.~Chen, D.~Knop, J.~Luo, and R.~Niedermeier.
\newblock Adapting stable matchings to evolving preferences.
\newblock In {\em The Thirty-Fourth {AAAI} Conference on Artificial
  Intelligence, {AAAI} 2020, The Thirty-Second Innovative Applications of
  Artificial Intelligence Conference, {IAAI} 2020, The Tenth {AAAI} Symposium
  on Educational Advances in Artificial Intelligence, {EAAI} 2020, New York,
  NY, USA, February 7-12, 2020}, pages 1830--1837. {AAAI} Press, 2020.

\bibitem{ChenSS19}
J.~Chen, P.~Skowron, and M.~Sorge.
\newblock Matchings under preferences: Strength of stability and trade-offs.
\newblock {\em ACM Transactions on Economics and Computation}, 9(4):1--55,
  2021.

\bibitem{CyganFKLMPPS15}
M.~Cygan, F.~V. Fomin, L.~Kowalik, D.~Lokshtanov, D.~Marx, M.~Pilipczuk,
  M.~Pilipczuk, and S.~Saurabh.
\newblock {\em Parameterized Algorithms}.
\newblock Springer, 2015.

\bibitem{EibenKPS19}
E.~Eiben, D.~Knop, F.~Panolan, and O.~Such{\'{y}}.
\newblock Complexity of the steiner network problem with respect to the number
  of terminals.
\newblock In R.~Niedermeier and C.~Paul, editors, {\em 36th International
  Symposium on Theoretical Aspects of Computer Science, {STACS} 2019, March
  13-16, 2019, Berlin, Germany}, volume 126 of {\em LIPIcs}, pages 25:1--25:17.
  Schloss Dagstuhl - Leibniz-Zentrum f{\"{u}}r Informatik, 2019.

\bibitem{gale1962college}
D.~Gale and L.~S. Shapley.
\newblock College admissions and the stability of marriage.
\newblock {\em The American Mathematical Monthly}, 69(1):9--15, 1962.

\bibitem{gupta2020parameterized}
S.~Gupta, P.~Jain, S.~Roy, S.~Saurabh, and M.~Zehavi.
\newblock On the (parameterized) complexity of almost stable marriage.
\newblock In N.~Saxena and S.~Simon, editors, {\em 40th {IARCS} Annual
  Conference on Foundations of Software Technology and Theoretical Computer
  Science, {FSTTCS} 2020, December 14-18, 2020, {BITS} Pilani, {K} {K} Birla
  Goa Campus, Goa, India (Virtual Conference)}, volume 182 of {\em LIPIcs},
  pages 24:1--24:17, 2020.

\bibitem{GutinNY21}
G.~Z. Gutin, P.~R. Neary, and A.~Yeo.
\newblock Unique stable matchings.
\newblock {\em CoRR}, abs/2106.12977, 2021.

\bibitem{ImpagliazzoP99}
R.~Impagliazzo and R.~Paturi.
\newblock Complexity of $k$-{SAT}.
\newblock In {\em Proceedings of the 14th Annual {IEEE} Conference on
  Computational Complexity, Atlanta, Georgia, USA, May 4-6, 1999}, pages
  237--240. {IEEE} Computer Society, 1999.

\bibitem{KamadaKojima:2015}
Y.~Kamada and F.~Kojima.
\newblock Efficient matching under distributional constraints: Theory and
  applications.
\newblock {\em The American Economic Review}, 105(1):67--99, 2015.

\bibitem{Maffray1992}
F.~Maffray.
\newblock Kernels in perfect line-graphs.
\newblock {\em J. Comb. Theory Ser. B}, 55(1):1?8, 1992.

\bibitem{MaiV18}
T.~Mai and V.~V. Vazirani.
\newblock Finding stable matchings that are robust to errors in the input.
\newblock In Y.~Azar, H.~Bast, and G.~Herman, editors, {\em 26th Annual
  European Symposium on Algorithms, {ESA} 2018, August 20-22, 2018, Helsinki,
  Finland}, volume 112 of {\em LIPIcs}, pages 60:1--60:11, 2018.

\bibitem{MaiV2020}
T.~Mai and V.~V. Vazirani.
\newblock An efficient algorithm for fully robust stable matchings via join
  semi-sublattices.
\newblock {\em CoRR}, abs/1804.05537v4, 2020.

\bibitem{Marx10}
D.~Marx.
\newblock Can you beat treewidth?
\newblock {\em Theory Comput.}, 6(1):85--112, 2010.

\bibitem{McVitieW70}
D.~McVitie and L.~Wilson.
\newblock Stable marriage assignment for unequal sets.
\newblock {\em BIT Numerical Math.}, 10(3):295--309, 1970.

\bibitem{Roth84}
A.~E. Roth.
\newblock The evolution of the labor market for medical interns and residents:
  a case study in game theory.
\newblock {\em Journal of Political Economy}, 92(6):991--1016, 1984.

\bibitem{Roth86}
A.~E. Roth.
\newblock On the allocation of residents to rural hospitals: a general property
  of two-sided matching markets.
\newblock {\em Econometrica}, 54(2):425--427, 1986.

\bibitem{schrijver2000combinatorial}
A.~Schrijver.
\newblock A combinatorial algorithm minimizing submodular functions in strongly
  polynomial time.
\newblock {\em Journal of Combinatorial Theory, Series B}, 80(2):346--355,
  2000.

\end{thebibliography}
\end{document}